\documentclass[letterpaper, 10pt, conference]{ieeeconf}
\IEEEoverridecommandlockouts
\usepackage{amsmath} 
\usepackage{amssymb}  
\usepackage{graphicx}
\usepackage{amsfonts}
\usepackage{float}
\usepackage{color}
\usepackage{xcolor}
\usepackage{subcaption}
\usepackage[font=footnotesize]{caption}
\usepackage{array}
\usepackage[markup=default]{changes}
\usepackage{booktabs,array,dcolumn}
\newcolumntype{P}[1]{>{\centering\arraybackslash}p{#1}}
\usepackage{mathtools}
\usepackage{algpseudocode}
\usepackage{listings}
\usepackage{cite}
\usepackage{color}
\usepackage{comment}
\usepackage[ruled,vlined]{algorithm2e}
\usepackage{graphicx}
\usepackage{color}
\usepackage{adjustbox}
\usepackage{centernot}
\usepackage{csquotes}
\usepackage{mathtools}
\usepackage{mathabx}
\usepackage{physics}
\usepackage{float}
\usepackage{multirow}
\usepackage{enumerate}
\usepackage{color}
\usepackage{cite}

\usepackage{xcolor}
\usepackage[colorlinks,citecolor=cyan,urlcolor=blue]{hyperref}
\usepackage{hyperref}
\usepackage{etoolbox}
\newcommand{\rom}[1]{\uppercase\expandafter{\romannumeral #1\relax}}
\renewcommand{\arraystretch}{1.4}
\usepackage{bm}
\usepackage{balance}
\usepackage{mathtools}

\newtheorem{theorem}{Theorem}
\newtheorem{corollary}{Corollary}
\newtheorem{proposition}{Proposition}
\newtheorem{example}{Example}
\newtheorem{remark}{Remark}
\newtheorem{definition}{Definition}

\usepackage{color}
\definecolor{gray}{RGB}{128,128,128}

\newcommand{\remove}[1]{}

\newcommand{\Mstar}{M^{*}}
\newcommand{\ME}{M_{\mathcal{E}}}
\newcommand{\MR}{M_{\mathcal{R}}}
\newcommand{\EE}{\mathbb{E}}

\begin{document}
\title{Finite-Sample Limits of Entropy-Based Structure Identification in Discretized Nonlinear Systems}

\author{Pratishtha Shukla and James Nutaro \\
\small{Computational Sciences and Engineering Division, Oak Ridge National Laboratory}
\thanks{*This manuscript has been authored by UT-Battelle, LLC, under contract DE-AC05-00OR22725 with the U.S. Department of Energy (DOE). The U.S. government retains and the publisher, by accepting the article for publication, acknowledges that the U.S. government retains a nonexclusive, paid-up, irrevocable, worldwide license to publish or reproduce the published form of this manuscript, or allow others to do so, for U.S. government purposes. DOE will provide public access to these results of federally sponsored research in accordance with the DOE Public Access Plan (https://www.energy.gov/doe-public-access-plan).}
\thanks{Computational Sciences and Engineering Division, Oak Ridge National Laboratory, US
        {\tt \small shuklap@ornl.gov, nutarojj@ornl.gov}}%
}
\maketitle

\begin{abstract}
Discretization fundamentally limits structure identification in stochastic systems. When system stochasticity exceeds the discretization resolution, entropy-based methods lose their ability to distinguish which input drives the output. We study this in Fuzzy Inductive Reasoning (FIR), a nonparametric framework for learning dynamical systems from discretized measurements, where the choice of input variables determines both predictive accuracy and the interpretability of the learned input--output relationships. Entropy-based selection targets explainability, i.e., identifying which variables causally drive the output, while mean-squared-error-based selection targets prediction. We introduce a resolution-stochasticity ratio that governs when entropy-based selection is reliable. Three results follow. First, entropy-based selection is consistent below this threshold but loses discriminative power above it, regardless of sample size. Second, using the entropy-selected variables for prediction instead of the MSE-selected ones incurs a closed-form excess prediction risk that grows with input complexity and shrinks with sample size. Third, reliable identification of the causally relevant inputs requires data that scales with the number of input combinations and inversely with the strength of the entropy signal. The theory is validated on a two-state Markov model and demonstrated on a distribution grid reliability dataset analyzing the impact of infrastructure investment, where the goal is to explain which investments drive reliability improvements rather than merely predict outcomes.
\end{abstract}

\section{Introduction}
\label{sec:intro}

Fuzzy Inductive Reasoning (FIR)~\cite{cellier1996combined} is a framework to learn input-output dynamics from data. FIR maps variables to qualitative levels (e.g., ``low,'' ``medium,'' ``high'') and constructs probabilistic transition rules from discrete patterns. FIR has been applied to systems with partially unknown dynamics, such as autonomous space systems and complex energy demand processes~\cite{tang2024fuzzy, mamlook2009fuzzy}. Similar systems arise in electricity markets, where price and demand dynamics emerge from interactions among multiple agents and operational constraints, motivating the use of data-driven forecasting methods~\cite{kalhori2022data}.

A central design decision in FIR is to select the input variables; specifically, which variables and their lag in time~\cite{nebot2012fuzzy}. This selection is called a mask. The choice of input variables serves two distinct goals: \emph{explainability}, i.e., identifying which variables causally drive the output so that the learned relationships make domain sense, and \emph{prediction}, i.e., minimizing forecast error. Entropy-based selection targets explainability by minimizing residual uncertainty given the inputs; mean-squared-error (MSE) based selection targets prediction accuracy~\cite{acosta2007optimization}. Too few variables may omit relevant structure, while too many lead to sparsely populated rule tables and degraded predictions.

Despite its importance, the choice of a metric for mask selection is essentially empirical~\cite{bagherpour2015hierarchical,jurado2017fuzzy}. In part, this is because we lack a precise understanding of how the metric for ranking masks, uncertainty in measurements of the system, and discretization resolution jointly affect the quality of the mask selected. Specifically, it remains unclear when an entropy-based metric reliably recovers the underlying input structure and when it systematically favors overly complex masks in sparse, noisy settings.

Information-theoretic feature selection methods based on mutual information (MI), such as mRMR~\cite{peng2005feature, papaioannou2025role} and conditional MI maximization~\cite{covert2023learning}, provide asymptotic consistency in continuous settings but do not account for discretization ceilings or finite-sample sparsity. A recent review of information-theoretic variable selection~\cite{mielniczuk2022information} notes that empirical conditional MI performs poorly under sparse discrete distributions, precisely the setting studied here. Heuristic FIR mask selection via genetic algorithms~\cite{acosta2007optimization} lacks finite-sample guarantees. To our knowledge, no prior work derives a closed-form excess prediction risk for discretized models, nor jointly characterizes the effect of stochasticity and discretization resolution on structural recovery.

To fill this gap, we develop a theoretical analysis of mask selection under entropy- and mean-squared-error-based criteria. We introduce a resolution-stochasticity quantity that captures the interaction between intrinsic variability and discretization, and show that it fundamentally limits the discriminative power of entropy-based methods. We characterize regimes where entropy-based selection is reliable and where it becomes unreliable, while mean-squared-error-based objectives remain robust for prediction. 

This framework provides a practical guideline for choosing between entropy and MSE-based methods and it quantifies both the excess prediction risk arising from entropy-driven over-selection and the sample requirements for accurate mask recovery. Although developed for FIR, the results characterize a general information limit in discretized nonparametric models with sparse rule tables. Furthermore, the grid reliability application demonstrates the practical value of the method to identify which infrastructure investments drive reliability improvements.



\section{Fuzzy Inductive Model}
\label{sec:model}
FIR models dynamical systems using discretized input–output relationships learned from data. Conceptually, it partitions the input space into discrete values and estimates a conditional output distribution for each observation of these values. The model identification problem is to select a set of time lagged input that best explains the observed dynamics. 


\subsection{System and Mask}
We consider a discrete-time system with a single output variable and one or more input variables ($1, 2 \dots, d$) that are observed over time. The variables are discretized to take values $1, 2, \dots, q_i$, where $q_i$ denotes the number of discrete values for variable $i$. We assume that the observed sequence has length $L$. The discrete value of input variable $d$ at time $t$ is denoted $X_t^d$ and for the output $Y_t$. We build a model of this system by selecting a particular \emph{mask} of the input sequence and mapping observations of that mask to an anticipated output. The discovery of a best \emph{mask} is the goal of the FIR procedure.

\begin{example}
Consider two binary variables observed over time, with the goal of predicting the next value of the lower row given past values of both:
\begin{align*}
\text{Upper:}\quad
&1\ 1\ 0\ 1\ 0\ 0\ 1\ 0\ 0\ 0\
  0\ 0\ 1\ 0\ 0\ 0\ 1\ 0\ 1\ 0\\
\text{Lower:}\quad
&1\ 0\ 1\ 1\ 0\ 0\ 0\ 1\ 1\ 1\
  1\ 1\ 1\ 0\ 0\ 0\ 0\ 1\ 1\ 0
\end{align*}
A \emph{mask} selects which past values to use as predictors. For a first attempt, \textbf{Mask 1} uses only the previous upper value. This generates twenty input to output pairs, some of which are repeated. 
\begin{center}
\small
\begin{tabular}{ccc}
\hline
\textbf{Input} &
\textbf{Output} &
\textbf{Count} \\
\hline
0 & 0 & 5 \\
0 & 1 & 8 \\
1 & 0 & 4 \\
1 & 1 & 3 \\
\hline
\end{tabular}
\end{center}
A better model, i.e., \textbf{Mask 2} uses both previous values of upper and lower rows.
\begin{center}
\small
\begin{tabular}{ccc}
\hline
\textbf{Input} &
\textbf{Output} &
\textbf{Count} \\
\hline
(0,0) & 0 & 5 \\
(0,1) & 1 & 7 \\
(1,0) & 1 & 3 \\
(1,1) & 1 & 4 \\
\hline
\end{tabular}
\end{center}
Mask 2 produces more informative predictions, each input pattern maps to a near-certain output. The choice of mask determines model quality. 
\end{example}

\begin{definition}[Mask] \label{def:mask}
A mask is a subset
\begin{align}
M \subseteq \{1,\dots,d\} \times
\{0,\dots,L\},
\end{align}
whose elements $(i,\ell)$ specify which variable--lag pairs are used as input by the mask. For a mask $M$, the corresponding input symbol at time $t$ is a tuple $X_t^M$ that contains the elements $X_{t-\ell}^i$ for each $(i,\ell)$ in $M$.
\end{definition}

The output at time $t$ given a mask $M$ is described by a conditional probability distribution
\begin{equation}
Y_t \mid X_t^M \sim P(Y_t \mid X_t^M),
\label{eq:system}
\end{equation}
with condition mean as
\begin{equation}
\mu(X_t^M) =
\mathbb{E}[Y_t \mid X_t^M].
\label{eq:condmean}
\end{equation}
Real data inherently has variability. If this is not fully captured by the selected mask, the system is said to be stochastic, with residual variability $\sigma_\varepsilon^2 = \mathrm{Var}(Y_t \mid X_t^M)$.

\begin{example}
With $d=2$ variables and $L=2$, there are $2\times3=6$ possible masks. The mask $M=\{(1,2),(2,1)\}$ selects variable~1 at lag~2 and variable~2 at lag~1, giving $X_t^M = (X_{t-2}^1,X_{t-1}^2)$.
\end{example}

\subsection{Input Cells and the Rule Table}
Each component of $X_t^M$ takes a discrete value, so $X_t^M$ takes values in a finite set of possible combinations. Each such combination defines a cell. There are
\begin{equation} \label{eq:Km}
    K(M) = \prod_{(i,\ell)\in M} q_i
\end{equation}
such combinations under mask $M$. Each unique combination
defines an \emph{input cell} $B_k$,
for $k = 1,\ldots,K(M)$. Note that $B_k$ is a region of the discretized input space, not an
observation. For a fixed mask
$M$, there are exactly $K(M)$ cells
regardless of how many observations
the data contains.

\begin{example}
With $d=2$ variables, $q_1=q_2=3$ bins each, and mask $M=\{(1,1),(2,1)\}$, we have $K(M)=9$ input cells. Cell $B_1 =(0,0)$ where both input variables take value zero at lag~1. The remaining cells are $B_2 = (0,1)$, $B_3 = (0,2)$, $B_4 = (1,0)$, $B_5 = (1,1)$, $B_6 = (1,2)$,  $B_7 = (2,0)$, $B_8 = (2,1)$, and $B_9 = (2,2)$.
\end{example}

The cell $B_k$ appears some number of times in our set of observations. Each such appearance is associated with a value $y$ that can be taken by $Y_t$. Let $n(B_k,y)$ be the number of instances where $y$ is observed in response to $B_k$ and $n(B_k)$ the total number of instances of $B_k$ in the data. The FIR model estimates the conditional output distribution empirically by
\begin{equation} \label{eq:count}
  \hat{P}(Y{=}y \mid B_k) = \frac{n(B_k,\, y)}{n(B_k)} = \hat{p}_{yk}
\end{equation}
The collection of conditional distributions across all cells is the \emph{rule table}, i.e., the complete FIR model.  If the possible values of the output are $y_1, \dots, y_r$ then the sample average of output of cell $B_k$ is 
\begin{align} \label{eq:out-est-mu}
\hat{\mu}_{B_k} =
\sum_{j=1}^{r}
y_j \cdot \hat{p}_{jk}
\end{align}
To simplify our analysis, we assume that each cell $B_k$ appears at least once in the data set.


\subsection{Mask Selection Objectives}
Two objectives are used to select among candidate masks:
\begin{align}
  \ME &= \arg\min_{M}\;
        \hat{H}(M),
        \label{eq:entropy_obj}\\
  \MR &= \arg\min_{M}\;
        \hat{R}^2(M),
        \label{eq:rmse_obj}
\end{align}
Suppose that our empirical data set has $N_s$ observations of cells paired with an output. Let $y_i$ be the output of the $i$th observation and $B_i$ the cell to which it belongs. The 
empirical conditional entropy of mask $M$ is
\begin{equation} \label{eq:entropy}
 \hat{H}(M) = \sum_{k=1}^{K(M)}
\hat{P}(B_k)\, \hat{H}(B_k).
\end{equation}
The term $\hat{H}(B_k)$ is the entropy estimator for cell $k$ given by
\begin{equation*}
   \hat{H}(B_k) = -\sum_j
\hat{p}_{jk} \log_2 \hat{p}_{jk},
\end{equation*}
with $\hat{p}_{jk}$ is the probability of cell $B_k$ producing output $j$ as obtained from empirical cell counts (\ref{eq:count}); the sum is over all possible values of $j$. For the same data, the empirical MSE relative to a data set is given by
\begin{align} \label{eq:MSE}
  \hat{R}^2(M) = \frac{1}{N_s}
  \sum_{i=1}^{N_s}
\bigl(y_i -
\hat{\mu}_{B_i}\bigr)^2,
\end{align}
where $\hat{\mu}_{B_i}$  (\ref{eq:out-est-mu}) is the estimated mean of the cell containing observation $i$.

The MSE (\ref{eq:MSE}) and entropy (\ref{eq:entropy}) are distinct. Entropy $\hat{H}$ depends on the full conditional distribution while $\hat{R}^2$ depends only on the conditional mean. Entropy-based mask selection relies on in-sample estimates of coarsened conditional entropy, whereas model quality is ultimately measured by out-of-sample prediction error. The gap between these objectives underlies the behavior analyzed in this paper.

\subsection{Sparsity Parameter}
As the mask size increases, the number of cells grows exponentially with $|M|$, causing many cells to contain few or no observations when data are limited. Estimating a distribution requires more data per cell than estimating a mean. This asymmetry becomes critical in the sparse regime. We quantify sparsity by
\begin{align} \label{eq:lambda}
  \lambda = N_s/K(M).
\end{align} 
Since $\lambda$ varies across candidate masks, sparsity regimes are characterized using the worst-case cell count $K(M_{\max})$, where $M_{\max} = \arg\max_{M} K(M)$. Small values of $\lambda$ correspond to sparse rule tables, while large values indicate well-populated cells.

\subsection{Resolution-Stochasticity Ratio}

The output space is partitioned into $q$ equal-sized intervals. In the normalized output space $[0,1]$, each bin has width $\Delta = 1/q$. Now suppose the system output depends on a finite history of past inputs through the conditional mean $\mu(X^M_t)$ (eq.~\ref{eq:condmean}), but exhibits intrinsic stochasticity that follows $\mathcal{N}(0,\sigma_\varepsilon^2)$.


The intrinsic variability $\varepsilon$ spreads the measured output by $\sigma_\varepsilon$ around its expected value. The ratio
\begin{equation}
  \rho = \frac{\sigma_\varepsilon}{\Delta} =
  \sigma_\varepsilon \cdot q
  \label{eq:rho}
\end{equation}
measures how many discrete output values the variability spans. It therefore compares the stochastic variability of the measured system with the resolution of the discretized representation. 

Assuming $\mu(X_t^M) \approx \mu_k$ for all observations in $B_k$, the conditional distribution of
$Y_t$ in cell $B_k$ is $\mathcal{N}(\mu_k, \sigma_\varepsilon^2)$ and the probability of output bin $j$ is:
\begin{align}
p_{jk} = \Phi\!\left(
\tfrac{(j{+}1)/q - \mu_k}
{\sigma_\varepsilon}\right) -
\Phi\!\left(\tfrac{j/q - \mu_k}
{\sigma_\varepsilon}\right),
\label{eq:CDF}
\end{align}
where $\Phi$ is the standard normal CDF. These assumptions support the theoretical analysis of Section~\ref{sec:theory}; the rule table estimates $\hat{P}(Y \mid B_k)$ directly from cell counts without any distributional assumption on the intrinsic stochasticity.

Fig.~\ref{fig:fir_cells} illustrates the role of $\rho$. When $\rho < 1$, each cell produces a distinct output distribution. When $\rho > 1$, all cells produce approximately uniform distributions $p_{jk} \approx 1/q$.

\begin{figure}[t]
\centering
\includegraphics[width=\columnwidth]
{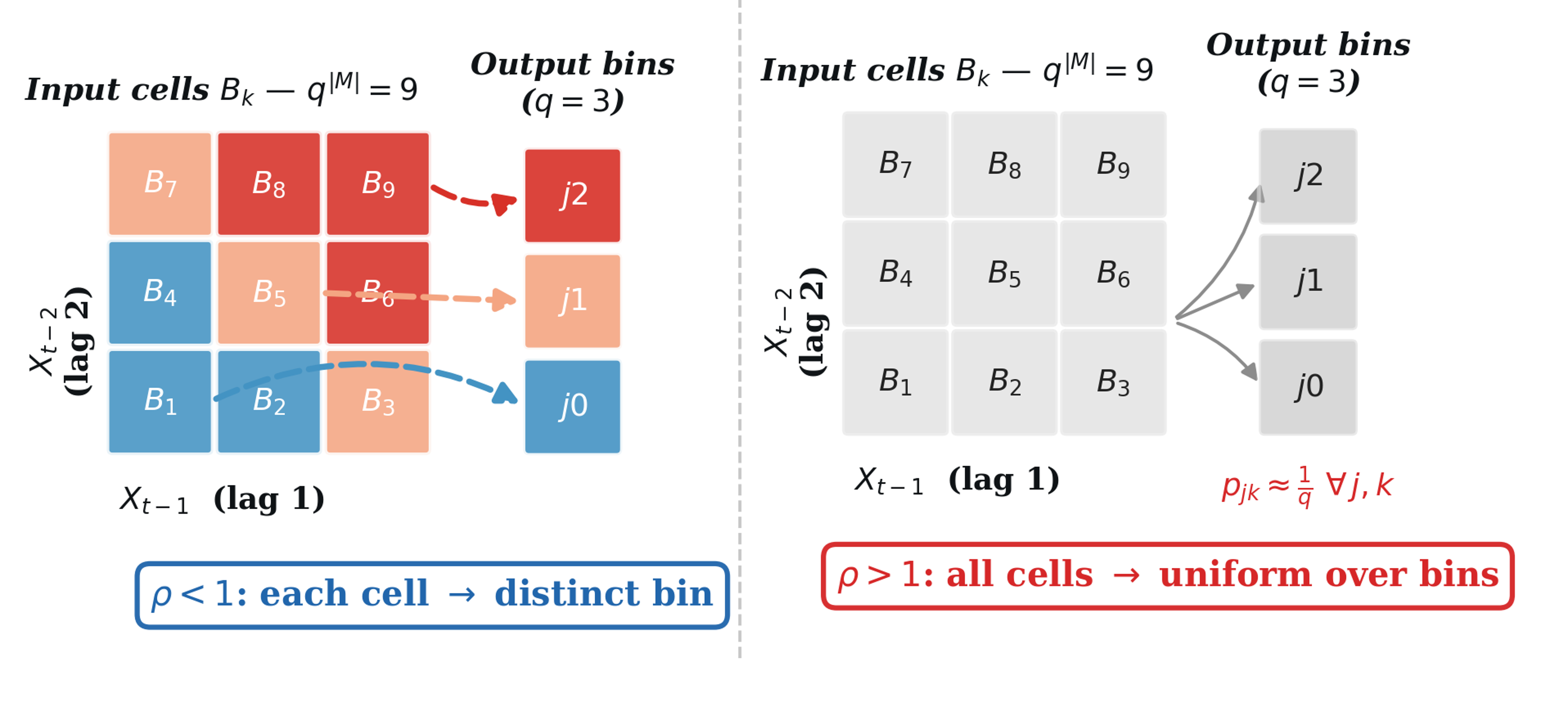}
\caption{FIR rule table structure for a
two-input mask ($q_v=3$, $|M|=2$,
$K(M)=9$ cells). Color indicates the
dominant output bin.}
\label{fig:fir_cells}
\end{figure}

\section{Theoretical results}
\label{sec:theory}

\label{subsec:notation}
\begin{definition}[Minimal Sufficient Mask]
The mask $M^\star$ is \emph{minimal sufficient} if, for all $t$, $Y_t$ is conditionally independent of all inputs not in $X_t^{M^\star}$, and no proper subset of $M^\star$ satisfies this condition. Hence $M^\star$ contains exactly the variable-lag pairs that carry all predictive information about $Y_t$ with no redundancy.
\end{definition}

Recovering $M^\star$ from data is the goal of mask selection. The primary information quantity for this purpose is the \emph{coarsened conditional entropy} $H_q(M)$. This is defined as the conditional entropy of the discretized output $Y_t$, taking $q$ discrete values, given the inputs selected by mask $M$. This differs from the continuous entropy $H(\varepsilon) = \tfrac{1}{2}\log(2\pi e \sigma_\varepsilon^2)$, which grows unbounded with $\sigma_\varepsilon$. Because the discretized output takes at most $q$ values, $H_q(M)$ is bounded:
\begin{align}
H_q(M) \leq \log_2 q,
\label{eq:hq_ceiling}
\end{align}
with equality only when the $q$ output bins are equiprobable~\cite{cover2006elements}. This \emph{discretization ceiling} is a hard constraint independent of the underlying stochasticity. We define the intrinsic coarsened entropy as $H_q(\varepsilon) = H_q(M^\star)$, i.e., the coarsened
conditional entropy under the minimal sufficient mask. The $q$ subscript distinguishes this from
the unbounded continuous entropy $H(\varepsilon)$. 

\subsection{Main Results}

The following three results characterize mask selection. 

\subsubsection{Resolution Regime and Consistency}
Entropy masking is said to be \emph{consistent} if the selected mask contains $M^\star$ with high probability as the number of samples $N_s$ increases. Exact recovery corresponds to selecting $M^\star$ itself. MSE masking is \emph{prediction-consistent} if its prediction error converges to the irreducible variability level. Entropy-based mask selection relies on distinguishing conditional distributions across input cells. The following result identifies when this distinction is possible. 

\begin{theorem} \label{thm:threshold}
Consider input--output observations $(X_t, Y_t)$ generated with intrinsic variability as $\varepsilon_t \sim
\mathcal{N}(0,\sigma_\varepsilon^2)$, and $q$ output bins. Let $\rho = \sigma_\varepsilon q$.
\begin{enumerate}
    \item[\textbf{(i)}] \textbf{Resolved regime ($\rho < 1$):} Assume $\lambda>1$. 
    The coarsened conditional entropy under the minimal sufficient mask satisfies
    \begin{align*}
        H_q(\Mstar) < \log_2 q.
    \end{align*}
    Moreover, as $N_s \to \infty$, entropy masking is consistent: $\mathbb{P}(\ME \supseteq \Mstar) \to 1.$

\item[\textbf{(ii)}] \textbf{Saturation regime ($\rho > 1$):}
    The coarsened conditional entropy reaches the ceiling for all masks:
    \begin{align*}
        H_q(M) \to \log_2 q
        \qquad \forall M,
    \end{align*}
    so entropy masking is inconsistent for any $N_s$.
    
    \item[\textbf{(iii)}]\textbf{Ceiling regime ($\rho \approx 1$):}
        The coarsened conditional entropies of all masks, including $M^\star$, approach the discretization ceiling:
        \begin{align*}
        H_q(M) \approx \log_2 q.
        \end{align*}
    Entropy-based discrimination between masks degrades.

    \item[\textbf{(iv)}] \textbf{MSE prediction-consistency (all $\rho$):}
    For any mask $M \supseteq \Mstar$, $\hat{R}^2(M) \to \sigma_\varepsilon^2 \quad \text{as } N_s \to \infty.$  MSE masking is prediction-consistent for all $\rho$, although it does not uniquely recover $\Mstar$.
\end{enumerate}
\end{theorem}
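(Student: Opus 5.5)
The plan is to handle the four parts separately. Throughout, we work with the Gaussian cell model (\ref{eq:CDF}), where the bin probabilities $p_{jk}$ are determined by $\mu_k$ and $\sigma_\varepsilon$. We pass from population to empirical quantities by the strong law of large numbers applied cell by cell. The premise $\lambda>1$, together with the standing assumption that every cell is observed, is what lets each $n(B_k)$ grow with $N_s$.

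For \textbf{(i)}, with $\rho<1$ we have $\sigma_\varepsilon<\Delta$. Take the bin containing $\mu_k$, of width $\Delta=1/q$. The mass of $\mathcal{N}(\mu_k,\sigma_\varepsilon^2)$ in that bin is bounded below by an explicit quantity exceeding $1/q$; the case $\mu_k$ at a bin edge is the worst case and needs a short check, possibly combining two adjacent bins. So $P(Y\mid B_k)$ is not uniform on the $q$ bins. Strict concavity of entropy with equality only at the uniform law~\cite{cover2006elements} then gives $H_q(B_k)<\log_2 q$ for every $k$. Averaging over $k$ gives $H_q(\Mstar)<\log_2 q$.

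For consistency, we use two facts.
\begin{itemize}
\item If $M\not\supseteq\Mstar$, then some pair in $\Mstar\setminus M$ carries information, by minimality. Since $\rho<1$ makes distinct $\mu_k$ yield distinct discretized distributions, this gives a strict gap $H_q(M)-H_q(M\cup\Mstar)\ge\delta>0$, by the chain rule and strict positivity of conditional mutual information.
\item If $M\supseteq\Mstar$, then $H_q(M)=H_q(\Mstar)$.
\end{itemize}
The plug-in estimator $\hat H(M)$ converges almost surely to $H_q(M)$ for each of the finitely many masks, by continuity of entropy on the simplex. A union bound then gives, with probability tending to one, that every mask not containing $\Mstar$ has $\hat H$ larger than $\hat H(\Mstar)$. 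Hence $\mathbb{P}(\ME\supseteq\Mstar)\to1$.

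The main obstacle is the gap $\delta>0$. Conditional independence of continuous $Y$ does not automatically transfer to the discretized $Y$. I would first try to show that, for $\rho<1$, the map $\mu\mapsto(p_{jk})_j$ in (\ref{eq:CDF}) is injective. Then distinct conditional means give distinct coarsened laws, so discretization loses no dependence.

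For \textbf{(ii)} and \textbf{(iii)}, we show that $\max_j|p_{jk}-1/q|\to0$ as $\rho$ grows, uniformly in $\mu_k$. When $\sigma_\varepsilon\gg\Delta$, the Gaussian density is nearly constant across a window of width $q\Delta$. Each $p_{jk}$ is then approximately $\Delta\,\phi(\cdot)/\sigma_\varepsilon$ times a normalization after accounting for mass outside $[0,1]$; here I would assume, or renormalize by, clipping to the extreme bins. This forces $H_q(B_k)\to\log_2 q$ for every cell and every mask. The spread $\max_M H_q(M)-\min_M H_q(M)$ therefore vanishes, and the finite-sample fluctuation of $\hat H$ dominates it whatever the value of $N_s$. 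That gives inconsistency. Part (iii) is the quantitative version near $\rho\approx1$: a Taylor bound of $H$ around the uniform law shows the deficit $\log_2 q-H_q(M)$ is of order $\sum_j(p_{jk}-1/q)^2$, which is small. This part is heuristic, and I would state it as an approximation.

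For \textbf{(iv)}, fix $M\supseteq\Mstar$. Then $\mathbb{E}[Y\mid X^M]=\mathbb{E}[Y\mid X^{\Mstar}]=\mu_k$, and $\hat\mu_{B_k}\to\mu_k$ almost surely by the law of large numbers in each cell. Decompose
\[
\hat R^2(M)=\tfrac1{N_s}\sum_i (y_i-\mu_{B_i})^2+\tfrac1{N_s}\sum_i(\mu_{B_i}-\hat\mu_{B_i})^2+\text{cross term}.
\]
The first term converges to $\sigma_\varepsilon^2$, taken as the residual variance of the discretized output as in the paper's definition. The second and cross terms vanish because $K(M)$ is finite. None of this uses $\rho$. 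Finally, every superset of $\Mstar$ achieves the same limit, so MSE does not uniquely recover $\Mstar$.
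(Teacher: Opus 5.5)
Your argument for (ii) breaks at two points. First, the clipping option does not give the claimed limit. If the Gaussian mass below $0$ and above $1$ is assigned to the extreme bins, those two bins each receive mass tending to $\Phi(0)=1/2$ as $\sigma_\varepsilon\to\infty$. Every interior bin receives $O(1/(q\sigma_\varepsilon))\to0$. The cell law therefore tends to $(1/2,0,\dots,0,1/2)$ and $H_q(B_k)\to1$ bit, which is strictly below $\log_2 q$ for $q\ge3$. Only renormalization works: truncating $\mathcal{N}(\mu_k,\sigma_\varepsilon^2)$ to $[0,1]$ gives $\max_j|p_{jk}-1/q|\to0$ uniformly in $\mu_k\in[0,1]$, so commit to that. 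Some fix is genuinely needed, because the paper's approximation $\tfrac{1}{q\sigma_\varepsilon}\phi(\cdot)$ tends to $0$, not to $1/q$.

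Second, and more seriously, the claim that the fluctuation of $\hat H$ dominates the spread ``whatever the value of $N_s$'' is false at any fixed $\rho>1$. The injectivity of $\mu\mapsto(p_{jk})_j$ that you call the main obstacle in (i) does not use $\rho<1$. Under clipping, $P(Y_q\le j\mid\mu)=\Phi(((j+1)/q-\mu)/\sigma_\varepsilon)$ is strictly decreasing in $\mu$ for $j<q-1$. Under truncation, the family has monotone likelihood ratio. Both hold for every $\sigma_\varepsilon>0$. If $\Mstar$ is defined for the discretized $Y_t$, as the paper's notation suggests, you do not even need injectivity. Hence $H_q(M)-H_q(M\cup\Mstar)>0$ for every finite $\rho$, tending to $0$ only as $\rho\to\infty$. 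Your own plug-in argument from (i), under the same assumptions, then yields $\mathbb{P}(\ME\supseteq\Mstar)\to1$ at fixed $\rho>1$. What can be proved is an order-of-limits statement: for each fixed $N_s$, the population gaps drop below the estimation error as $\rho\to\infty$. Equivalently, the sample size needed for recovery diverges with $\rho$. The paper's proof makes the same leap (``even as $N_s\to\infty$''), so this is a defect of (ii) as worded, but your proof should not assert it.

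Two smaller gaps sit in (i). The explicit bound you can get for an edge-located $\mu_k$ is $\Phi(\Delta/\sigma_\varepsilon)-\tfrac12>\Phi(1)-\tfrac12\approx0.34$, which exceeds $1/q$ only for $q\ge3$. For $q=2$ and $\mu_k=1/2$ the cell law is exactly uniform, so the per-cell strict inequality fails. You must argue after averaging, using a positive-probability cell with $\mu_k\neq1/2$. The degenerate case $q=2$, $\Mstar=\emptyset$, $\mu\equiv1/2$ actually violates (i).

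Also, ``by minimality'' does not by itself show that $Y$ depends on $X^{\Mstar\setminus M}$ given $X^M$ when $M$ contains inputs outside $\Mstar$. An exact duplicate $X_2=X_1$ makes $\{X_2\}$ tie with $\Mstar=\{X_1\}$. You need a positivity assumption so that the intersection property holds. Then conditional independence of $Y$ and $X^{\Mstar\setminus M}$ given $X^M$ would, by intersection and contraction, make $M\cap\Mstar\subsetneq\Mstar$ sufficient, contradicting minimality. The paper's mixture argument hides the same assumption.

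Beyond these points, your route is sound and tighter than the paper's. The paper proves (i) via $p_{j^\star(k),k}\approx1$, which fails near $\rho=1$, and asserts that $\Mstar$ uniquely minimizes $H_q$. Your non-uniformity argument and your allowance for ties among supersets are exactly what the claim needs. In (iv), you treat the in-sample $\hat R^2$ of eq.~\eqref{eq:MSE} directly by almost-sure convergence. The paper instead computes an expected out-of-sample error $\sigma_\varepsilon^2(1+1/n_k)$, under an independence assumption that in-sample residuals do not satisfy.
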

\begin{proof}
Fix mask $M$ and input cell $B_k$ with conditional mean $\mu_k$. Under eq.~\eqref{eq:system}, $Y_t \mid B_k \sim \mathcal{N}(\mu_k, \sigma^2_\varepsilon)$. The conditional cell-wise output probability is $p_{jk}$ (eq.~\eqref{eq:CDF}).

\noindent\textbf{(i) Resolved regime ($\rho < 1$):} When $\rho < 1$, the intrinsic variability scale is smaller than the output bin width $\Delta = 1/q$. For each input cell $B_k$, most conditional probability mass concentrates in the output bin containing $\mu_k$. Denoting this bin $j^\star(k)$, we have $p_{j^\star(k),k} \approx 1$, and consequently
\begin{align*}
H_q(B_k) = -\sum_{j=1}^{q} p_{jk}\log_2 p_{jk} \approx 0.
\end{align*}
Since $H_q(M^\star)$ is the cell-wise average of $H_q(B_k)$ under $M^\star$, it follows that $H_q(M^\star) < \log_2 q$, with $H_q(M^\star) \approx 0$ in the strongly resolved limit ($\rho \to 0$) and $H_q(M^\star) = 0$ only in the deterministic case
(Proposition~\ref{prop:deterministic}).

The role of distinct conditional means across cells is not to increase $H_q(M^\star)$, but to make the relevant variables identifiable. If $M \not\supseteq M^\star$ omits a relevant variable, observations from cells with distinct $\mu_k$ are pooled into the same cell. The resulting output distribution is a mixture of otherwise distinguishable cell-wise distributions, strictly increasing conditional entropy $H_q(M) > H_q(M^\star)$. Hence $M^\star$ uniquely minimizes $H_q(M)$. Since $\hat{H}(M) \to H_q(M)$ as $N_s \to \infty$, entropy masking is consistent, i.e., $\mathbb{P}(\ME \supseteq \Mstar) \to 1.$


\noindent\textbf{(ii) Saturation regime ($\rho > 1$):} Unlike part~(i), when $\sigma_\varepsilon > 1/q$, the Gaussian variability becomes large relative to $\Delta$. Applying the first-order normal approximation $\Phi(b) - \Phi(a) \approx (b-a)\phi(m)$ to eq.~\eqref{eq:CDF}:
\begin{align*}
p_{jk} \approx
\frac{1/q}{\sigma_\varepsilon}
\phi\!\left(
\frac{j/q - \mu_k}{\sigma_\varepsilon}
\right)
\to \frac{1}{q}
\quad \text{as } \sigma_\varepsilon
\to \infty,
\end{align*}
so, in the large-variability limit, all $q$ output bins become equiprobable in every cell regardless of $\mu_k$. Therefore $H_q(M) \to \log_2 q$ for all masks $M$. The entropy objective becomes flat across masks, losing discriminative power even as $N_s \to \infty$. 

\noindent \textbf{(iii) Ceiling regime ($\rho \approx 1$):} When $\rho$ is close to $1$, entropy differences across masks become small, and discrimination degrades. The threshold $\rho_c = 1$ is a
necessary condition under Gaussian stochasticity, the boundary is not sharp, as entropy masking
degrades continuously with increasing $\rho$ rather than failing abruptly at $\rho = 1$.

\noindent\textbf{(iv) MSE prediction-consistency:}
The FIR prediction for a test observation in cell $B_k$ with output $y \in \{y_1,\ldots,y_r\}$
is the cell mean $\hat{\mu}_{B_k}$. Define the per-cell expected squared error $e_k = \EE[(y -
\hat{\mu}_{B_k})^2]$. Adding and subtracting $\mu_k$:
\begin{align} \label{eq:expectation}
e_k =
\underbrace{\EE[\varepsilon^2]}_{
\sigma^2_\varepsilon}
+
\underbrace{\EE[(\mu_k -
\hat{\mu}_{B_k})^2]}_{
\sigma^2_\varepsilon/n_k}
+
\underbrace{2\,\EE[\varepsilon
(\mu_k - \hat{\mu}_{B_k})]}_{
=\,0},
\end{align}
where the cross term vanishes because $\varepsilon$ is independent of the observations used to form $\hat{\mu}_{B_k}$. Thus $e_k = \sigma^2_\varepsilon(1 + 1/n_k) \to \sigma^2_\varepsilon$ as $n_k \to \infty$, so $\hat{R}^2(\MR) \to \sigma^2_\varepsilon$ for any $\rho > 0$. Since this holds for any $M \supseteq \Mstar$, all such masks achieve the same variability floor. Thus, $\MR$ is prediction-consistent but does not uniquely recover $\Mstar$.
\end{proof}


\subsubsection{Closed-Form Excess Risk} 
While Theorem~\ref{thm:threshold} characterizes when entropy can identify the true structure, it may still select a superset of variables. The additional variables increase model complexity without improving predictive power. The following result quantifies the prediction cost of such over-selection.

\begin{theorem}\label{thm:delta}
Assume balanced cells ($n_k = N_s/K(M)$ for all $k$) and that $\hat{R}^2(M)$ is evaluated on observations independent of those used to estimate $\hat{\mu}_{B_k}$. For mask $M$ with $K(M)$ cells:
\begin{align*}
\delta = \EE[\hat{R}^2(\ME)] -
\EE[\hat{R}^2(\Mstar)]
= \frac{\sigma^2_\varepsilon}{N_s}
\bigl(K(\ME) - K(\Mstar)\bigr).
\end{align*}
$\delta \geq 0$ is monotone increasing in over-selection $(K(\ME)-K(\Mstar))$ and $\sigma^2_\varepsilon$, and monotone decreasing in $N_s$.
\end{theorem}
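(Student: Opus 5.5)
The plan is to reuse the per-cell bias--variance decomposition from the proof of Theorem~\ref{thm:threshold}(iv), eq.~\eqref{eq:expectation}, and then average it over cells under the balanced-cell assumption. Fix a mask $M \supseteq \Mstar$; both $\ME$ and $\Mstar$ are of this form when entropy over-selects. For $M \supseteq \Mstar$, minimal sufficiency makes the conditional mean of $Y_t$ on each cell $B_k$ of $M$ equal to the mean $\mu_k$ of the $\Mstar$-cell that contains it. The observation noise is then $\varepsilon$ with variance $\sigma_\varepsilon^2$, because conditioning on extra inputs beyond $\Mstar$ changes neither the mean nor the variance. Hence there is no approximation bias, and the only error beyond the noise floor comes from estimating $\hat\mu_{B_k}$.

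\textbf{Step 1 (per-cell risk).} Because $\hat{R}^2$ is evaluated on held-out observations independent of the training data, the cross term in eq.~\eqref{eq:expectation} vanishes exactly. Since $\hat\mu_{B_k}$ is an unbiased sample mean of $n_k$ i.i.d.\ outputs, we get $\EE[(\mu_k-\hat\mu_{B_k})^2]=\sigma_\varepsilon^2/n_k$. Therefore $e_k=\sigma_\varepsilon^2(1+1/n_k)$.

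\textbf{Step 2 (averaging over cells).} The expected empirical MSE is the average of $e_k$ weighted by the fraction of test observations that fall in each cell. With balanced cells, every weight is $1/K(M)$ and $n_k=N_s/K(M)$, so
\begin{align*}
\EE[\hat R^2(M)] = \sigma_\varepsilon^2\Bigl(1+\frac{K(M)}{N_s}\Bigr).
\end{align*}
Applying this to $M=\ME$ and $M=\Mstar$ and subtracting cancels the common floor $\sigma_\varepsilon^2$ and yields the stated formula for $\delta$.

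\textbf{Step 3 (monotonicity).} Since $\ME\supseteq\Mstar$, every element of $\ME\setminus\Mstar$ multiplies $K$ by some $q_i\ge 1$ in eq.~\eqref{eq:Km}, so $K(\ME)\ge K(\Mstar)$ and $\delta\ge0$. The formula is linear and increasing in both $K(\ME)-K(\Mstar)$ and $\sigma_\varepsilon^2$, and it is proportional to $1/N_s$, hence decreasing in $N_s$.

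The main obstacle is justifying Step 1 rigorously. The outputs are discretized bin values rather than Gaussian, so the ``noise'' is the discretized residual. I would handle this by defining $\sigma_\varepsilon^2$ as the conditional variance of the discretized $Y_t$ given $X_t^{\Mstar}$, which is consistent with how it is introduced in Section~\ref{sec:model}. With that definition the sample-mean variance identity holds exactly, with no normality needed.

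A second subtlety is that the identity above is really $\EE[\hat R^2(M)]$ for the particular mask $M$, whereas $\ME$ is itself random. I would state the result conditionally on the selected mask, treating $\ME$ as fixed once chosen from separate data. The independence assumption in the theorem is what licenses this, since it keeps the selection from being coupled to the held-out evaluation.
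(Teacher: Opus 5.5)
Your proposal is correct and is essentially the paper's own proof. Both take the per-cell decomposition $e_k=\sigma_\varepsilon^2(1+1/n_k)$ from eq.~\eqref{eq:expectation}, with the cross term removed by train/test independence, substitute $n_k=N_s/K(M)$ and average over cells to get eq.~\eqref{eq:rmse_decomp}, and then difference at $\ME$ and $\Mstar$.

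Your extra care is welcome: you restrict to $M\supseteq\Mstar$ so there is no approximation bias, where the paper simply assumes $\mu_k$ is constant within cells, and you treat $\ME$ as fixed, as the paper does implicitly. One correction: conditioning on $\ME$ without disturbing the law of $\hat\mu_{B_k}$ requires selection to be independent of the estimation sample, which is an extra sample-splitting assumption, not a consequence of the theorem's stated evaluation/estimation independence.
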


\begin{proof}
Fix mask $M$ with $K(M)$ cells. Let $n_k$ denote the number of observations in cell $k$, so $\sum_{k=1}^{K(M)} n_k = N_s$. The true cell mean $\mu_k$ is assumed approximately constant within each cell, and $\hat{\mu}_{B_k}$ is its estimate from eq.~\eqref{eq:out-est-mu}.

For a test observation in $B_k$ with output $y \in \{y_1,\ldots,y_r\}$, use $e_k = \EE[(y - \hat{\mu}_{B_k})^2]$. Adding and subtracting $\mu_k$ as in eq.~\ref{eq:expectation}, 
the cross term vanishes because the test observation is independent of $\hat{\mu}_{B_k}$ by assumption.
Under balanced cells ($n_k = N_s/\lambda$), eq.~\eqref{eq:lambda}):
\begin{align*}
\frac{\sigma^2_\varepsilon}{n_k}
= \frac{\sigma^2_\varepsilon \cdot
K(M)}{N_s}
= \frac{\sigma^2_\varepsilon}{\lambda}.
\end{align*}
Averaging over all $K(M)$ cells:
\begin{align}
\EE[\hat{R}^2(M)] =
\sigma^2_\varepsilon\!\left(1 +
\frac{K(M)}{N_s}\right).
\label{eq:rmse_decomp}
\end{align}
Each additional variable $i$ multiplies $K(M)$ by $q_i$, proportionally increasing the per-cell estimation variance. The excess risk is the difference of eq.~\eqref{eq:rmse_decomp} at $\ME$ and $\Mstar$:
\begin{align} \label{eq:delta}
\delta =
\frac{\sigma^2_\varepsilon}{N_s}
(K(\ME) - K(\Mstar)).
\end{align}
Since $\ME \supsetneq \Mstar$
implies $K(\ME) > K(\Mstar)$,
$\delta \geq 0$. Monotonicity
follows directly from
eq.~\eqref{eq:delta}; monotonicity
in $H_q(\varepsilon)$ follows from
Corollary~\ref{cor:monotone}.
\end{proof}

\begin{remark}
\label{rem:jensen}
Theorem~\ref{thm:delta} assumes balanced cells. When cells are unbalanced, Jensen's inequality gives
\begin{align*}
  \delta_{\mathrm{emp}} \leq
  \frac{\sigma^2_\varepsilon}{N_s}
  \bigl(K(\ME) - K(\Mstar)\bigr),
\end{align*}
where $\delta_{\mathrm{emp}} = \hat{R}^2(\ME) -
\hat{R}^2(\MR) \geq 0$ is the empirical prediction cost of entropy
over-selection relative to the MSE-optimal mask, used as a proxy for $M^*$. When $\lambda < 1$, empty cells introduce a bias term absent from the variance decomposition and the bound does not apply. Under Poisson cell occupancy with mean $\lambda$, the fraction of empty cells is approximately $e^{-\lambda}$ and the expected per-cell variance inflation is $\lambda/(1-e^{-\lambda})$, giving a tighter bound when $\lambda > 1$.
\end{remark}

\begin{corollary}[Monotonicity in $H_q(\varepsilon)$] \label{cor:monotone}
Under the conditions of Theorem~\ref{thm:delta}, $\delta$ is monotone increasing in $H_q(\varepsilon)$: higher coarsened stochasticity entropy leads to greater over-selection by entropy masking and hence larger excess prediction risk.
\end{corollary}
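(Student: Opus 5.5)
The plan is to write the excess risk of Theorem~\ref{thm:delta} as a composition of two maps and show that each is monotone non-decreasing. The first map is $\delta$ as a function of the over-selection gap $K(\ME)-K(\Mstar)$, which eq.~\eqref{eq:delta} makes explicit. The second map is the over-selection gap as a function of $H_q(\varepsilon)$. Since $\sigma^2_\varepsilon/N_s>0$ is fixed for given data and noise level, monotonicity of $\delta$ reduces to showing that $\EE[K(\ME)]$ (or $K(\ME)$ in a stochastic-ordering sense) is non-decreasing in $H_q(\varepsilon)$. A subtlety is that $\sigma_\varepsilon$ and $H_q(\varepsilon)$ move together. I will use this deliberately. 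Through eq.~\eqref{eq:CDF}, $H_q(\varepsilon)=H_q(\Mstar)$ is a continuous, increasing function of $\rho=\sigma_\varepsilon q$ on the resolved side (for fixed $q$ and cell means $\mu_k$). Both factors in eq.~\eqref{eq:delta}, $\sigma^2_\varepsilon$ and the gap, then increase along the same parametrization, so their product increases.

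First step: establish that $H_q(\Mstar)$ is increasing in $\sigma_\varepsilon$. The cell-wise distribution $p_{\cdot k}$ is a discretized Gaussian. Increasing $\sigma_\varepsilon$ amounts to convolving with additional independent Gaussian noise before binning, which moves the distribution toward uniform in the majorization sense. This holds exactly when the bins are far from the boundary; in general I would argue it through the ceiling behaviour of Theorem~\ref{thm:threshold}(ii). Schur-concavity of entropy then gives monotonicity. This lets me reparametrize by $H_q(\varepsilon)$ instead of $\rho$.

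Second step, which is the main obstacle: show that entropy masking over-selects more as $H_q(\varepsilon)$ grows. I would use the standard negative bias of the plug-in entropy estimator. Under the Miller--Madow approximation, $\EE[\hat H(M)]\approx H_q(M)-\frac{(r_M-1)K(M)}{2N_s\ln 2}$, where $r_M$ is the effective number of occupied output bins per cell. For a superset $M\supsetneq\Mstar$, the population entropy satisfies $H_q(M)=H_q(\Mstar)$ by sufficiency, so the empirical advantage of $M$ over $\Mstar$ is the bias difference $\frac{(r-1)(K(M)-K(\Mstar))}{2N_s\ln 2}$. The number $r$ of effectively occupied bins grows with cell entropy; I take $r\approx 2^{H_q(\varepsilon)}$ as a perplexity proxy. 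So the spurious advantage of every superset is increasing in $H_q(\varepsilon)$. Meanwhile, for masks that omit part of $\Mstar$, the true penalty $H_q(M)-H_q(\Mstar)$ shrinks as $\rho$ grows, by Theorem~\ref{thm:threshold}(ii)--(iii). Both effects push $\arg\min_M \hat H(M)$ toward masks with larger $K(M)$, giving a non-decreasing $K(\ME)$.

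Turning this heuristic into a rigorous ordering is the delicate part. The argmin of a noisy objective is not automatically monotone. I would handle it by restricting to the nested chain of supersets $\Mstar\subset M_1\subset M_2\subset\cdots$ and comparing expected objectives. The selected index is then the largest $m$ for which the bias gain exceeds the fluctuation. Since the bias gain is increasing in $H_q(\varepsilon)$, that index is non-decreasing. Combining this with the first step and eq.~\eqref{eq:delta} completes the argument.
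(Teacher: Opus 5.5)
Your reduction is the paper's. Through eq.~\eqref{eq:delta}, with $K(\Mstar)$ fixed, everything rests on $K(\ME)$ growing with $H_q(\varepsilon)$: the entropy reduction from informative variables shrinks while the plug-in bias keeps favoring large masks. The paper stops at the qualitative ``entropy masking tends to select larger masks'' plus the limit $\ME\to M_{\max}$. You add two things it lacks.

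\emph{First}, your first step deals with the fact that $\sigma^2_\varepsilon$ in the prefactor moves with $H_q(\varepsilon)$, which the paper never addresses. On an unbounded lattice of bins your majorization claim is exact: for every $\sigma_\varepsilon$ the $k$ heaviest bins are the $k$ nearest $\mu_k$, and the Gaussian mass of a fixed interval containing $\mu_k$ decreases in $\sigma_\varepsilon$. On $[0,1]$, however, it depends on how out-of-range mass is handled. Clipping it into the end bins with $q=3$, $\mu_k=1/2$ makes $H_q$ climb to $\log_2 3$ and fall back to $1$ bit. The limit in Theorem~\ref{thm:threshold}(ii) cannot supply monotonicity.

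\emph{Second}, you let the bias grow with $H_q(\varepsilon)$, whereas the paper's $(K(M)-1)/(2N_s)$ is $H_q$-independent. But the first-order Miller--Madow bias counts bins of positive probability, which is $q$ for every $\sigma_\varepsilon>0$ under Gaussian noise. So $r\approx 2^{H_q(\varepsilon)}$ is a finite-sample heuristic, not a consequence of the expansion.

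The step you flag as delicate does not work as written. Along your chain every $M_m\supseteq\Mstar$ has $H_q(M_m)=H_q(\varepsilon)$ and the same per-cell output distributions. Hence $\EE[\hat H(M_m)]\approx H_q(\varepsilon)-\frac{(r-1)K(M_m)}{2N_s\ln 2}$ is strictly decreasing in $m$ for every $H_q(\varepsilon)>0$. Comparing expected objectives therefore selects the top of the chain regardless of $H_q(\varepsilon)$, and no monotone dependence comes out of expectations.

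Any such dependence must come from fluctuations, and there your threshold rule has two problems. It is not derived from the argmin. It is also not monotone for the reason you give, because the fluctuation scale depends on $H_q(\varepsilon)$ too: Step~2 of Theorem~\ref{thm:sample} takes $\sigma_\delta\approx H_q(\varepsilon)\sqrt{K(M_{\max})/N_s}$. You would need the ratio of bias gain to fluctuation to increase, not just the gain. Even then, a single chain does not control the argmin over non-nested masks.

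Your premise that the penalty for omitting a relevant variable shrinks as $\rho$ grows is likewise only a limit statement. When distinct cell means share an output bin, added noise can increase the coarsened information (a dithering effect), so the penalty can first grow with $\sigma_\varepsilon$. The paper asserts the same thing, so this weakness is shared.

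On the positive side, the expected-value selection is pinned at the top of the chain, consistent with the paper's Markov-chain experiment selecting all five lags at nearly every noise level. In that regime the monotonicity of $\delta$ is carried entirely by the $\sigma^2_\varepsilon$ prefactor through your first step, a route the paper's proof does not have.
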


\begin{proof}
From Theorem~\ref{thm:delta}, $\delta$ is given by eq.~(\ref{eq:delta}). Since $K(M^\star)$ is fixed, it suffices to show that $K(M_E)$ increases with $H_q(\varepsilon)$. As $H_q(\varepsilon)$ increases toward $\log_2 q$, the conditional output distributions induced by different masks become less distinct. Hence the entropy reduction obtained by adding a truly informative variable becomes smaller, while the finite-sample bias favoring larger masks remains. As a result, entropy masking tends to select larger masks, increasing $K(\ME)$.

Therefore $\delta$ increases with $H_q(\varepsilon)$. In the limiting case $H_q(\varepsilon)\to\log_2 q$, entropy masking approaches the largest candidate mask $M_{\max}$, and
\begin{align}
\delta \to
\frac{\sigma_\varepsilon^2}{N_s}
\bigl(K(M_{max})-K(M^\star)\bigr).
\end{align}
where $K(M_{\max})$ is the cell count for $M_{max}$
\end{proof}

\subsubsection{Sample Complexity Scaling}
While Theorem~\ref{thm:delta} quantifies the cost of over-selection, the following result characterizes how much data is required for entropy masking to reliably recover the true structure. As stochasticity increases, entropy differences between candidate masks shrink, making them harder to detect. Since the variance of entropy estimates decreases with the number of samples per cell, reliably distinguishing small entropy gaps requires more data. This leads to a sample complexity that scales inversely with the square of the effective entropy signal $H_q(\varepsilon)$.

\begin{theorem}\label{thm:sample}
Let $H_q(\varepsilon) = H_q(\Mstar)$ denote the intrinsic coarsened conditional entropy. A sufficient condition for entropy masking to satisfy $\mathbb{P}(\ME \supseteq \Mstar) \geq 1-\alpha$ is:
\begin{align}
N_s \gtrsim
\frac{C_\alpha\, K(M_{\max})}
{H_q(\varepsilon)^2},
\end{align}
where $C_\alpha = 4z_\alpha^2$ and $z_\alpha$ is the $\alpha$-quantile of the standard normal distribution.
\end{theorem}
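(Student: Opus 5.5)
The plan is to reduce the event $\{\ME \not\supseteq \Mstar\}$ to a comparison between the empirical entropies of $\Mstar$ and the competing masks. We then control that comparison with a normal approximation to the fluctuations of the plug-in estimator $\hat{H}(M)$ from eq.~\eqref{eq:entropy}.

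By Theorem~\ref{thm:threshold}(i), every mask $M \not\supseteq \Mstar$ pools cells with distinct conditional means, so it has population gap $\Delta H(M) = H_q(M) - H_q(\Mstar) > 0$. The event $\ME \not\supseteq \Mstar$ requires $\hat{H}(M) \le \hat{H}(\Mstar)$ for some such $M$. It therefore suffices to show that, for the relevant competitors, the estimation error $|\hat{H}(M) - H_q(M)| + |\hat{H}(\Mstar) - H_q(\Mstar)|$ stays below $\Delta H(M)$ with probability at least $1-\alpha$. Following the heuristic stated before the theorem, I will take the effective entropy signal, i.e.\ the size of the gap that must be detected, to be of order $H_q(\varepsilon)$. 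This identification is modelling rather than derivation. It holds exactly in the regime where omitting a relevant variable drives the pooled distribution toward the ceiling while $H_q(\Mstar)$ remains small, and more generally when $\Delta H \gtrsim H_q(\varepsilon)$.

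Next, bound the fluctuation of $\hat{H}(M)$. Write $\hat{H}(M) = \sum_k \hat{P}(B_k)\hat{H}(B_k)$ and apply the delta method cell by cell. Each $\hat{H}(B_k)$ is asymptotically normal with variance $\mathrm{Var}(-\log_2 p_{Y k})/n_k$, and this quantity is bounded by a constant depending only on $q$; bounded per-cell entropy variance is the assumption I will invoke. Under balanced cells as in Theorem~\ref{thm:delta}, with $n_k = N_s/K(M)$, the standard deviation of $\hat{H}(M)$ scales as $\sqrt{K(M)/N_s}$. This is maximised over candidates at $K(M_{\max})$, which gives the worst case uniformly across masks. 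The plug-in bias of order $(q-1)K(M)/(2N_s\ln 2)$ is of lower order once $\lambda = N_s/K(M_{\max}) > 1$ is large, and I will absorb it into the $\gtrsim$.

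Finally, require the two-sided deviation, namely the sum of the two estimation errors, to be smaller than the gap. The condition is $2z_\alpha\sqrt{K(M_{\max})/N_s} \le H_q(\varepsilon)$. Squaring gives $N_s \ge 4z_\alpha^2 K(M_{\max})/H_q(\varepsilon)^2$, which is exactly $C_\alpha = 4z_\alpha^2$; the factor $2$ comes from accounting for error in both $\hat{H}(M)$ and $\hat{H}(\Mstar)$. A union bound over competitors would strictly inflate $z_\alpha$. I will note that this inflation is logarithmic in the number of masks and is hidden in $\gtrsim$, or alternatively restrict attention to the nearest competitor.

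The main obstacle is the second paragraph's step of justifying that the detectable gap scales with $H_q(\varepsilon)$, together with controlling the delta-method variance uniformly. Near the ceiling the relevant gaps shrink, so I would state the result as a scaling law under that identification rather than a sharp bound.
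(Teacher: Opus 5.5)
\textbf{The gap.} The gap is the step you flag yourself: replacing the population gap $\gamma=\min_{M\not\supseteq\Mstar}\bigl(H_q(M)-H_q(\Mstar)\bigr)$ by $H_q(\varepsilon)$. All of the $H_q(\varepsilon)$-dependence of your bound comes from this substitution. It cannot be derived from the hypotheses, because the only link between the two quantities is the ceiling, and that link pushes the wrong way. Since $H_q(M)\le\log_2 q$, you always have $\gamma\le\log_2 q-H_q(\varepsilon)$. So $\gamma\ge H_q(\varepsilon)$ is impossible once $H_q(\varepsilon)>\tfrac12\log_2 q$, and $\gamma\to 0$ at the ceiling.

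More fundamentally, sufficiency of $\Mstar$ makes each gap equal to the conditional mutual information $I(Y;X^{\Mstar\setminus M}\mid X^M)$. That quantity can be arbitrarily small while $H_q(\varepsilon)$ stays of order one. Take $\Mstar=\{X\}$ with $X$ binary, $q=3$, $\sigma_\varepsilon=0.1$ (so $\rho=0.3$), and cell means $0.5$ and $0.5+\eta$. Then $H_q(\varepsilon)\approx 0.55$ bits for all small $\eta$, but $\gamma=O(\eta^2)$. Add an irrelevant binary input $Z$ and let the candidates be the subsets of $\{X,Z\}$. For $M=\{Z\}$, the difference $\hat H(M)-\hat H(\Mstar)$ has mean about $\gamma$ and fluctuations of order $1/N_s$. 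So the event your reduction must exclude has probability near $1/2$ whenever $N_s\gamma\ll 1$, even though the stated condition already holds once $N_s$ is a fixed multiple of $K(M_{\max})=4$.

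What your argument actually gives is $N_s\gtrsim C_\alpha K(M_{\max})/\gamma^2$. This comes with a union-bound inflation of $z_\alpha^2$ of order $\log(m/\alpha)$ for $m$ candidate masks. Over all subsets that grows linearly in $d(L+1)$, so it is not a constant. Restricting to the nearest competitor does not bound the probability of a union. Stated in terms of $H_q(\varepsilon)$, your argument is a result under an extra hypothesis, not a proof of the statement.

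\textbf{Comparison with the paper's proof.} The paper's proof takes a different route and does not supply this step either. It never uses $\gamma$. Step~1 takes as the quantity to be controlled the Miller--Madow bias difference $\mu_\delta\approx(K(\ME)-K(\Mstar))/(2N_s)$ between a superset $\ME\supsetneq\Mstar$ and $\Mstar$. $H_q(\varepsilon)$ enters only on the noise side, through the asserted scaling $\mathrm{Var}(\hat H(B_k))=O(H_q(\varepsilon)^2K(M)/N_s)$. The proof then imposes $\mu_\delta/\sigma_\delta\le z_\alpha$.

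Your decomposition is better aimed at the stated event. Selecting a superset does not violate $\ME\supseteq\Mstar$, so $\mu_\delta$ is really an exact-recovery quantity. Your algebra also genuinely yields $C_\alpha=4z_\alpha^2$. By contrast, the paper's inequality, solved using $K(\ME)-K(\Mstar)\le K(M_{\max})$, gives $N_s\ge K(M_{\max})/(4z_\alpha^2H_q(\varepsilon)^2)$.

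Two further points affect both arguments.
\begin{enumerate}
\item Under balanced cells, $\hat H(M)$ averages $K(M)$ independent cell estimates with weights $1/K(M)$. Its standard deviation is therefore $O(N_s^{-1/2})$, not $\sqrt{K(M)/N_s}$; the paper makes the same aggregation step. Your bound survives because this error is conservative. But the genuine $K$-dependence then comes from the bias $(q-1)K(M)/(2N_s\ln 2)$ that you discard. A sharper version of your argument gives a sufficient condition of the form $N_s\gtrsim\max\{z_\alpha^2/\gamma^2,\,qK(M_{\max})/\gamma\}$.
\item With $\hat P(B_k)=n(B_k)/N_s$, the plug-in $\hat H(M)$ is the conditional entropy of the empirical distribution, so $\hat H(M\cup\Mstar)\le\hat H(M)$ on every sample. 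Over a union-closed candidate set, a non-superset can then be selected only through a tie. The binding comparison for containment is therefore $M$ versus $M\cup\Mstar$, not $M$ versus $\Mstar$. Your pairwise analysis against $\Mstar$ really concerns a bias-corrected criterion, such as the Miller--Madow version used in the experiments.
\end{enumerate}
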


\begin{proof}
Fix $M$ with $K(M)$ cells and $q$ output values.

\noindent\textbf{Step 1: Bias toward larger masks.}
By the Miller--Madow correction~\cite{miller1955note}, the plug-in entropy estimator has a systematic negative bias:
\begin{align*}
\mathbb{E}[\hat{H}(M)] \approx
H_q(M) -
\frac{K(M)-1}{2N_s}.
\end{align*}
For any $M \supseteq \Mstar$, minimality of $\Mstar$ implies $H_q(M) = H_q(\varepsilon)$. Additional variables do not reduce true conditional entropy. However, larger masks have more cells and incur larger bias. The expected apparent advantage of $\ME \supsetneq \Mstar$ over $\Mstar$ is:
\begin{align*}
\mu_\delta \approx
\frac{K(\ME) - K(\Mstar)}{2N_s}
> 0,
\end{align*}
so the entropy objective systematically favors larger masks in finite samples.

\noindent\textbf{Step 2: Estimation variance.}
The per-cell entropy $\hat{H}(B_k)$
(eq.~\eqref{eq:entropy}) is
estimated from $n_k = N_s/K(M)$
observations. Standard concentration results for plug-in entropy estimators \cite{cover2006elements} imply that its variance scales inversely with the number of samples per cell and is proportional to the magnitude of the underlying entropy. Thus, $\mathrm{Var}(\hat{H}(B_k)) = O\bigl(H_q(\varepsilon)^2 \cdot K(M)/N_s\bigr)$. The standard deviation of the entropy difference between two masks is therefore:
\begin{align*}
\sigma_\delta \approx
\frac{H_q(\varepsilon)
\sqrt{K(M_{\max})}}{\sqrt{N_s}},
\end{align*}
where $K(M_{\max})$ is the worst-case bound.

\noindent\textbf{Step 3: Condition for correct selection.}
Correct selection requires the genuine entropy signal to overcome the bias $\mu_\delta$ by at least $z_\alpha$ standard deviations, i.e.\ $\mu_\delta/\sigma_\delta \leq z_\alpha$. Substituting:
\begin{align*}
\frac{K(\ME) - K(\Mstar)}{2N_s}
\cdot
\frac{\sqrt{N_s}}
{H_q(\varepsilon)
\sqrt{K(M_{\max})}}
\leq z_\alpha.
\end{align*}
Solving for $N_s$ and using
$K(\ME) - K(\Mstar) \leq
K(M_{\max})$:
\begin{align*}
N_s \geq C_\alpha\,
\frac{K(M_{\max})}
{H_q(\varepsilon)^2},
\quad C_\alpha = 4z_\alpha^2.
\end{align*}
The constant $C_\alpha$ is not tight. The bound captures the correct scaling in $K(M_{\max})$, $H_q(\varepsilon)$, and $\alpha$. Since $H_q(\varepsilon) \leq \log_2 q$ (eq.~\eqref{eq:hq_ceiling}), substituting gives the conservative bound:
\begin{align*}
N_s \gtrsim
\frac{C_\alpha\, K(M_{\max})}
{(\log_2 q)^2}
\quad \Longleftrightarrow \quad
\lambda \gtrsim
\frac{C_\alpha}{(\log_2 q)^2}.
\end{align*}
\end{proof}

\begin{remark}
When $\lambda < 1$, the discretization ceiling $H_q(\varepsilon) \leq \log_2 q$ prevents the entropy signal from overcoming estimation bias unless $N_s$ scales with $K(M_{\max})$.
\end{remark}

Increasing $q$ reduces $\rho$ but increases $K(M)$, reducing $\lambda$; decreasing $q$ does the opposite. Theorem~\ref{thm:threshold} governs the ceiling via $\rho$ and Theorem~\ref{thm:sample} governs sparsity via $\lambda$. A theoretically guided choice satisfies both conditions simultaneously: $\rho < 1$ requires $q < 1/\sigma_\varepsilon$, while $\lambda \geq C_\alpha/(\log_2 q)^2$ bounds the sample  requirement. The largest $q$ satisfying the resolved regime condition is $q^\star = \lfloor 1/\sigma_\varepsilon \rfloor$; since $\sigma_\varepsilon$ is unknown in practice, $q \in \{2, 3\}$ with regime diagnosis via Table~\ref{tab:principle} provides a reliable guideline. Data-adaptive $q$ selection is left for future work.


\begin{remark}[Non-Gaussian stochasticity]
The Gaussian assumption is used only to derive the $\rho=1$ boundary in Theorem~1. The discretization ceiling $H_q(M) \leq \log_2 q$ holds for any output distribution \cite{cover2006elements}, and the rule table estimates $\hat{P}(Y \mid B_k)$ directly from cell counts without any distributional assumption. Non-Gaussian stochasticity shifts the regime boundary but preserves the resolved/saturation structure.
\end{remark}

\subsection{Deterministic Regime and Mask Faithfulness}
\label{subsec:prop}
The following two results establish when entropy masking uniquely identifies $\Mstar$ and when MSE masking cannot.

\begin{proposition} \label{prop:deterministic}
Suppose no intrinsic stochasticity exists, so $Y_t$ is a deterministic function of $X_t^{M^\star}$.
\begin{enumerate}
\item \textbf{Entropy uniqueness}: $H_q(\Mstar) = 0$, and $\Mstar$ is the unique  minimizer of $H_q(M)$.
\item \textbf{MSE non-uniqueness}: Any mask $M$ achieving perfect prediction satisfies $R^2(M)=0$, and such masks need not coincide with $\Mstar$. 
\end{enumerate}
\end{proposition}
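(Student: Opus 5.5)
The plan is to prove the two parts separately, using only the definition of the minimal sufficient mask and the basic properties of discrete entropy: nonnegativity, the fact that conditioning cannot increase entropy, and zero entropy exactly for point masses.

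\textbf{Entropy uniqueness.} Since $Y_t = f(X_t^{\Mstar})$ deterministically, each cell $B_k$ of $\Mstar$ has a degenerate conditional output distribution, so $H_q(B_k)=0$. Averaging over cells gives $H_q(\Mstar)=0$. Because $H_q(M)\ge 0$ for every mask, $\Mstar$ is a minimizer.

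Uniqueness needs more care, since any $M\supseteq\Mstar$ also has $H_q(M)=0$. The word ``unique'' therefore has to be read as ``unique minimal minimizer,'' i.e.\ the unique inclusion-minimal mask with $H_q(M)=0$. I will state this reading explicitly.

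With that reading, the argument is as follows.
\begin{enumerate}
\item Suppose $H_q(M)=0$.
\item Then $Y_t$ is almost surely a function of $X_t^M$.
\item Hence $Y_t$ is conditionally independent of all inputs outside $M$ given $X_t^M$.
\item So $M$ satisfies the sufficiency condition in the definition of the minimal sufficient mask.
\item Minimality and uniqueness of $\Mstar$ then force $M\supseteq\Mstar$.
\end{enumerate}
Equivalently, if $M\not\supseteq\Mstar$, then $M$ omits a relevant pair $(i,\ell)\in\Mstar$. Minimality means $f$ genuinely depends on that coordinate, so some cell of $M$ pools two $\Mstar$-cells with different outputs. That gives a nondegenerate mixture in that cell and $H_q(M)>0$; this is the pooling argument from part (i) of Theorem~\ref{thm:threshold}, now made exact.

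The main obstacle is this step. It implicitly assumes that the minimal sufficient set is unique, which can fail when inputs are functionally redundant. I would add the standard positivity (faithfulness) assumption, namely that every input combination has positive probability, which guarantees uniqueness of the Markov blanket. I would also assume that distinct output values of $f$ fall in distinct output bins, so that coarsening does not merge them.

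\textbf{MSE non-uniqueness.} For any $M\supseteq\Mstar$, each cell of $M$ lies inside a cell of $\Mstar$, so $y$ is constant on it. Then $\hat\mu_{B_k}=y$ and $\hat R^2(M)=0$, consistent with part (iv) of Theorem~\ref{thm:threshold} with $\sigma_\varepsilon=0$. Since supersets $M\supsetneq\Mstar$ exist whenever $\Mstar$ is not the full candidate set, the set of zero-MSE masks is strictly larger than $\{\Mstar\}$.

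To make ``need not coincide'' concrete, I would also note that MSE depends only on conditional means. A mask can therefore achieve zero error even without containing every element of $\Mstar$, for instance when the output is determined up to an input that is a deterministic function of another input already in $M$. One small example of this kind finishes the part.
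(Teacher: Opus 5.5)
\textbf{Part~1.} Your Part~1 follows the paper's route: cells of $\Mstar$ have degenerate output distributions, so $H_q(\Mstar)=0$, and omitting an element of $\Mstar$ pools cells with different outputs. Your caveats improve on the paper in two ways. First, every $M\supseteq\Mstar$ also has $H_q(M)=0$, so only ``unique inclusion-minimal minimizer'' can be true. Second, the paper's ``observations \emph{can} have different outputs'' becomes ``do have'' only under an assumption such as positivity of the input distribution. (That assumption is positivity, not faithfulness.)

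\textbf{The gap: the example meant to finish Part~2.} In the deterministic setting, $R^2(M)=0$ forces $Y_t$ to be a.s.\ a function of $X_t^M$, and hence $H_q(M)=0$. The same holds empirically: $\hat R^2(M)=0$ and $\hat H(M)=0$ both say exactly that every observed cell carries a single output value. Now take a mask $M\not\supseteq\Mstar$ in which an element of $\Mstar$ is replaced by an input of which it is a deterministic function. This mask has zero entropy as well as zero MSE. It therefore contradicts the conclusion of your Part~1, and it also violates the positivity you imposed to reach that conclusion. So you cannot use it alongside Part~1.

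\textbf{What Part~2 actually establishes.} Under your assumptions, the zero-MSE masks are exactly the supersets of $\Mstar$, which are exactly the zero-entropy masks. Part~2 then holds only in the superset sense, and in that sense entropy is equally non-unique. If you read ``unique'' the same way in both parts, MSE also has $\Mstar$ as its unique minimal minimizer, and the advertised entropy/MSE asymmetry disappears. Keep only the superset argument, and say explicitly that this is the literal claim you prove.

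\textbf{The paper has the same defect.} Its Part~2 example, a longer lag in a periodic system predicting $Y_t$ perfectly, is a zero-entropy mask not containing $\Mstar$, which contradicts its own Part~1. Its Markov chain at $\varepsilon=0$ is a concrete instance. There $S_t=1-S_{t-1}=S_{t-2}$, and in general $S_t$ is a function of $S_{t-\ell}$ for every $\ell\geq 1$. Under the stationary law, every single lag is then a minimal sufficient mask with zero conditional entropy, so $\Mstar$ is not even well defined.
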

\begin{proof}
\textbf{Part 1.} Since $Y$ is a deterministic function of $X_t^{M^\star}$,
all observations within a cell $B_k$ under $M^\star$ share the same
output value, so $H_q(\Mstar) = 0$. If $M \not\supseteq \Mstar$, at least one relevant variable is omitted; observations in the same cell $B_k$ can have different outputs (since the omitted variable is not controlled for), so $H_q(M) > 0$. Minimality follows because any proper subset of $\Mstar$ fails to achieve zero entropy.

\textbf{Part 2.} Any mask that partitions the data such that all observations within each cell share the same output value achieves zero prediction error, since the cell mean $\hat{\mu}_{B_k}$ equals $Y$ for all observations assigned to that cell. Non-uniqueness arises because redundant or correlated inputs can reproduce the same partition. For example, in a periodic system, a longer lag may perfectly predict $Y$ without being causally minimal.
\end{proof}

\begin{table*}[ht]
\centering
\caption{Regime-based mask selection rule.
$\dagger$: faithfulness required for exact recovery.
$\ddagger$: empty cells default to global mean.}
\label{tab:principle}
\renewcommand{\arraystretch}{1.25}
\small
\begin{tabular}{llllll}
\hline
\textbf{Regime} & \textbf{Condition} &
\textbf{Structural Recovery ($\ME$)} &
\textbf{Prediction ($\MR$)} &
\textbf{Diagnostic} \\
\hline
Deterministic &
$\rho \approx 0$ &
$\ME = \Mstar$ exactly &
$\hat{R}^2$ &
--- \\
Resolved, dense &
$\rho < 1,\; \lambda \geq C_\alpha/H_q(\varepsilon)^2$ &
$\ME = \Mstar$$^\dagger$;\; else $\ME \supseteq \Mstar$ &
$\hat{R}^2$ &
Permutation test \\
Resolved, sparse &
$\rho < 1,\; 1 \leq \lambda < C_\alpha/H_q(\varepsilon)^2$ &
$\ME \supseteq \Mstar$;\; over-selection likely &
$\hat{R}^2$ &
Permutation test \\
Collapse &
$\rho < 1,\; \lambda < 1$ &
Unreliable$^\ddagger$
&
$\hat{R}^2$ &
Check $\lambda$ \\
Ceiling &
$\rho \gtrsim 1$ &
Unreliable (any $\lambda$, $N$) &
$\hat{R}^2$ &
Check $\rho$ \\
\hline
\end{tabular}
\end{table*}

\begin{proposition}
Consider $X_t$ and $Y_t$ as input and outputs with intrinsic stochasticity $\sigma_\varepsilon>0$ and minimal mask $\Mstar=\{X_1\}$. Let $X_2$
be a proxy variable related to $X_1$ through an unknown function $g: \mathbb{R} \to \mathbb{R}$ as $X_1 = g(X_2) + \eta$, where $\eta$ is independent of $X_2$ with $\mathrm{Var}(\eta) = \sigma^2_\eta > 0$.
\begin{enumerate}
\item \textbf{Entropy faithfulness}: If $\rho<1$, then $H_q(\{X_1\}) < H_q(\{X_2\})$. Entropy masking uniquely prefers the true structural mask.
\item \textbf{MSE non-faithfulness}: $R^2(\{X_2\}) = \sigma_\varepsilon^2 + \sigma_\eta^2$, so as $\sigma_\eta^2 \to 0$, $R^2(\{X_2\}) \to R^2(\{X_1\})$: MSE becomes indifferent between the true structural and proxy masks.
\end{enumerate}
\end{proposition}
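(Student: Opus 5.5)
The plan treats the two parts separately, since they rest on different facts: entropy faithfulness is a data-processing argument, and MSE non-faithfulness is a direct variance decomposition. The model is taken in the form used throughout Section~\ref{sec:theory}: $Y_t = \mu(X_1) + \varepsilon_t$ with $\varepsilon_t \sim \mathcal{N}(0,\sigma_\varepsilon^2)$ independent of $(X_1,X_2)$, and $Y_t \perp X_2 \mid X_1$ because $\Mstar=\{X_1\}$ is minimal sufficient.

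\textbf{Part 1.} The argument is that $X_2 \to X_1 \to Y_t$ is a Markov chain. From $X_1 = g(X_2)+\eta$ and minimal sufficiency of $\{X_1\}$, conditioning on $X_2$ can only lose information about the discretized output, so the data-processing inequality~\cite{cover2006elements} gives $I(Y_t;X_2)\le I(Y_t;X_1)$, equivalently $H_q(\{X_2\}) \ge H_q(\{X_1\})$. Strictness is the real content. I would write the conditional law of the discretized $Y_t$ given $X_2=x$ as a mixture over $\eta$ of the cell laws $p_{\cdot k}$ from eq.~\eqref{eq:CDF}. Because $\sigma_\eta^2>0$, each proxy cell pools several $X_1$-cells with positive weight. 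The pooling step in the proof of Theorem~\ref{thm:threshold}(i) then applies: when $\rho<1$, distinct $\mu_k$ give distinguishable cell distributions, so strict concavity of entropy makes the mixture entropy strictly larger. This yields $H_q(\{X_1\})<H_q(\{X_2\})$. Finally, $\hat H \to H_q$, so $\ME$ prefers $\{X_1\}$ asymptotically.

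The main obstacle is this strictness step. It needs two things: that $\eta$ actually mixes cells with different $\mu_k$ (i.e., $\mu$ is non-constant on the support of $X_1$ given $X_2$), and that the corresponding $p_{\cdot k}$ differ. I would state the first as a mild nondegeneracy assumption, analogous to the faithfulness footnote in Table~\ref{tab:principle}. For the second, I would use $\rho<1$: each $p_{\cdot k}$ concentrates on the bin containing $\mu_k$, so distinct dominant bins give distinct distributions.

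\textbf{Part 2.} I would use the law of total variance, $R^2(\{X_2\}) = \EE[\mathrm{Var}(Y\mid X_2)] = \sigma_\varepsilon^2 + \EE[\mathrm{Var}(\mu(X_1)\mid X_2)]$. The cross term vanishes because $\varepsilon$ is independent of $(X_1,X_2)$. For the identity as stated, I would take $\mu$ to be the identity map, or linear with unit slope; the statement implicitly normalizes it this way. Then $\mathrm{Var}(\mu(X_1)\mid X_2)=\mathrm{Var}(\eta)=\sigma_\eta^2$ by independence of $\eta$ and $X_2$, which gives $\sigma_\varepsilon^2+\sigma_\eta^2$. Theorem~\ref{thm:threshold}(iv) gives $R^2(\{X_1\})=\sigma_\varepsilon^2$, so the gap $\sigma_\eta^2$ vanishes as $\sigma_\eta^2\to0$. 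Meanwhile, the Part~1 gap depends only on whether cells are pooled, not on the size of $\sigma_\eta^2$, as long as $\sigma_\eta^2>0$. I would note that this contrast is the point of the proposition. For general Lipschitz $\mu$, the same argument gives $\sigma_\varepsilon^2 + O(\sigma_\eta^2)$, and the limit statement still holds.
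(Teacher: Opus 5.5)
Your proposal is correct and is essentially the paper's own argument made rigorous. The paper also proves Part~1 by noting that a single $X_2$-cell pools $X_1$-values with different conditional means of $Y$ (invoking $\rho<1$ to keep them distinguishable), and Part~2 via $R^2(\{X_2\})=\sigma_\varepsilon^2+\EE[(\mu_{X_1}-\mu_{X_2})^2]=\sigma_\varepsilon^2+\sigma_\eta^2$. Your data-processing/strict-concavity formalization, the nondegeneracy condition, and the unit-slope normalization of $\mu$ make explicit what the paper leaves implicit.

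One small repair: your ``distinct dominant bins'' step misses pooled means that lie in the same output bin. For $q\ge 2$, distinct $\mu_k$ always give distinct bin-probability vectors in eq.~\eqref{eq:CDF} for any $\sigma_\varepsilon>0$, because each $p_{jk}$ is strictly unimodal in $\mu_k$ about the center of bin $j$, and two bins with different centers pin $\mu_k$ down. So strictness follows without that step, and in fact without using $\rho<1$.
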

\begin{proof}
\textbf{Part 1.} Conditioning on $X_1$ leaves only $\varepsilon$ unresolved, so
$H_q(\{X_1\}) = H_q(\varepsilon).$ Conditioning on $X_2$ leaves additional uncertainty due to the
imperfect mapping from $X_2$ to $X_1$. Since $X_1 = g(X_2) + \eta$ with $\sigma_\eta > 0$, observations with the same value of $X_2$ can correspond to different values of $X_1$, and hence to different conditional means of $Y$. As a result, $H_q(\{X_2\}) > H_q(\{X_1\}).$ When $\rho < 1$, discretization preserves this separation: different cells under $X_1$ induce distinguishable output distributions, so the entropy gap remains.

\textbf{Part 2.}
The MSE of the proxy mask decomposes as:
\begin{align*}
R^2(\{X_2\}) = \sigma^2_\varepsilon +
\EE[(\mu_{X_1} - \mu_{X_2})^2]
= \sigma^2_\varepsilon + \sigma^2_\eta,
\end{align*}
where $\sigma^2_\eta$ captures the additional uncertainty from predicting $X_1$ via $X_2$. As $\sigma^2_\eta \to 0$, $R^2(\{X_2\}) \to R^2(\{X_1\}) = \sigma^2_\varepsilon$, the MSE objective cannot distinguish the proxy from the true structural variable.
\end{proof}
Proposition~2 formalizes the \emph{explainability} advantage of entropy masking: when $\rho < 1$, entropy correctly identifies $X_1$ as the true structural driver and rejects the proxy $X_2$, whereas MSE becomes indifferent between them as $\sigma^2_\eta \to 0$. This distinction is practically important when the goal is to understand which variables causally drive the output rather than merely predict it.

\subsection{Regime-Based Decision Principle}
\label{rem:principle}
Table~\ref{tab:principle} summarizes mask selection across the four regimes defined by $\rho$ (\ref{eq:rho}) and $\lambda$ (\ref{eq:lambda}). When \emph{explainability} is the goal, entropy masking in the resolved regime identifies the causally relevant input variables; when prediction alone is required, $\hat{R}^2$ is the robust criterion. The excess prediction risk from substituting $\ME$ for $\MR$ is bounded by $\delta$ (Theorem~\ref{thm:delta}). The permutation test assesses whether each variable's entropy contribution exceeds chance by shuffling input values across observations and comparing the observed entropy reduction against the permuted null distribution. A variable is retained if its contribution is statistically significant ($p < \alpha$), providing practical evidence against over-selection when exact recovery of $M^\star$ cannot be guaranteed, as demonstrated in Section~V.

\section{Experimental Validation}
\label{sec:experiments}

We validate the theoretical results using a two-state symmetric Markov chain:
\begin{align}
  S_{t+1} = \begin{cases}
    1 - S_t & \text{prob } 1{-}\varepsilon \\
    S_t     & \text{prob } \varepsilon
  \end{cases}, \quad S_t \in \{0,1\}.
\end{align}
The true minimal mask is $\Mstar = \{\text{lag-1}\}$, with $\sigma^2_\varepsilon = \varepsilon(1-\varepsilon)$ and binary entropy
$
H_2(\varepsilon)
=
-\varepsilon \log_2 \varepsilon
-
(1-\varepsilon)\log_2(1-\varepsilon).
$
Candidate masks: all $2^5 - 1 = 31$ subsets of lags $\{1,\ldots,5\}$; $q = 3$ bins; $N_s = 3000$;
100 Monte Carlo runs per $\varepsilon \in \{0.01,\ldots,0.40\}$. The entropy estimator includes the Miller-Madow correction~\cite{miller1955note}.

In this case, the stochasticity parameter directly controls the overlap between output distributions. This helps visualize the transition between the regimes predicted by the theory.

Fig.~\ref{fig:validation} shows four validation outcomes. \textbf{(a)}~$\hat{R}^2$ for $\MR$ tracks the variability floor $\varepsilon(1-\varepsilon)$ within 1--2\% at every variability level ($p < 0.001$, paired $t$-test), confirming Theorem~\ref{thm:threshold}(3). \textbf{(b)}~Excess risk $\delta$ grows monotonically from $0.0004$ to $0.012$ and stays below the Theorem~\ref{thm:delta} bound throughout, with the gap widening at high entropy. \textbf{(c)}~The true lag (lag-1) appears in $\ME$ in 100\% of runs at all variability levels, confirming $\ME \supseteq \Mstar$ (Theorem~\ref{thm:sample}). However, exact recovery $\ME = \Mstar$ occurs only at the lowest variability levels ($12\%$ at $\varepsilon = 0.01$), confirming that entropy masking over-selects. $\MR$ exact recovery degrades from 93\% to 11\% as variability increases. \textbf{(d)}~Entropy masking selects all 5 lags at all but the lowest variability level. The sample complexity threshold is crossed between $\varepsilon = 0.10$ and $\varepsilon = 0.15$ (threshold $\approx 2{,}990$ vs.\ $N_s = 2{,}095$), placing the left portion of each panel in the sparse regime and the right above it. Multi-variable coupled scenarios are validated in the grid reliability application (Section~V).

\begin{figure}[t]
\centering
\includegraphics[width=\columnwidth]{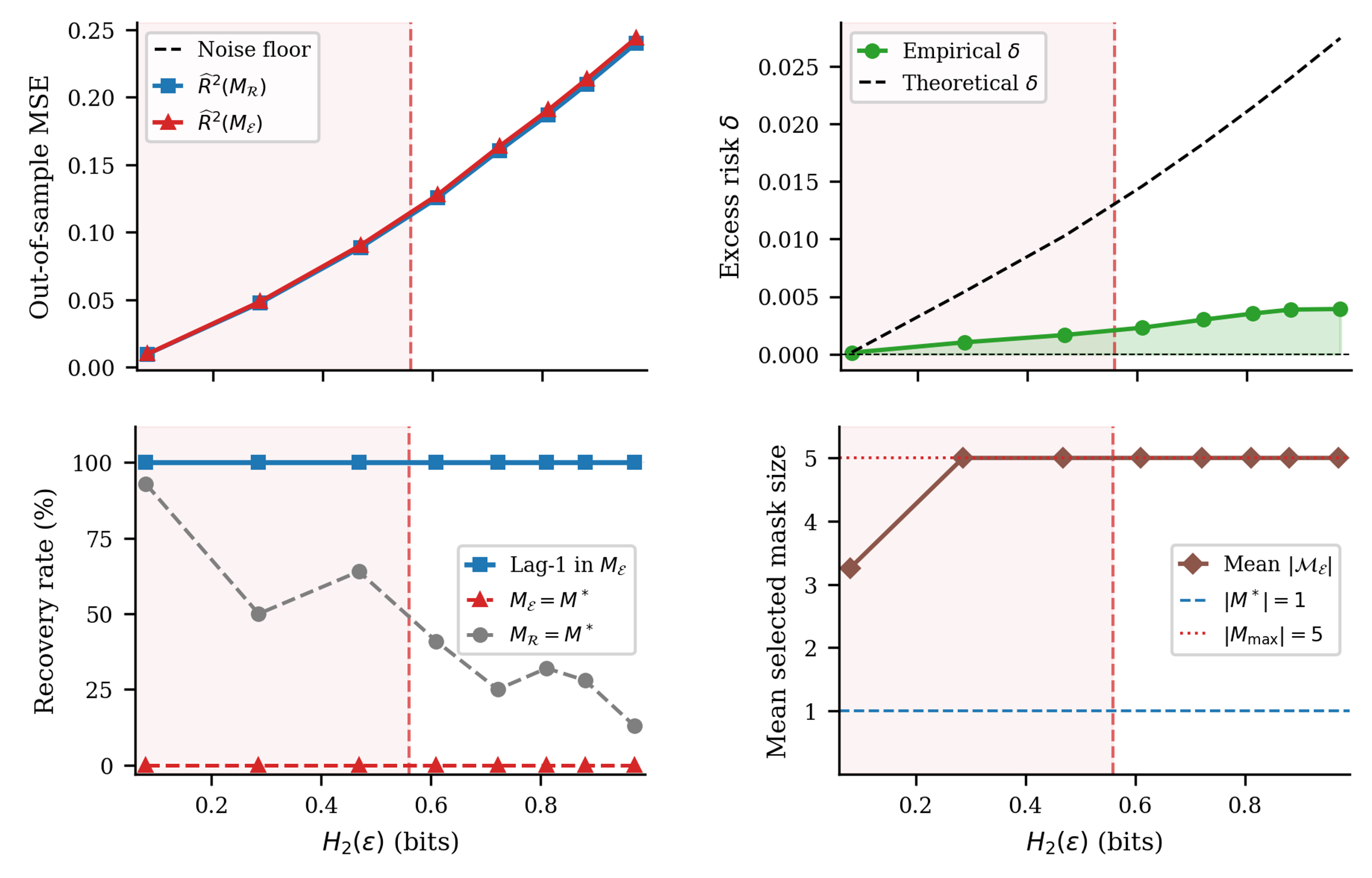}
\caption{Experimental validation on a two-state Markov chain. Red shading marks the sparse regime. (a) Empirical MSE and entropy. (b) Empirical and closed-form $\delta$. (c) Structural ID recovery rate. (d) Mean selected mask size.}
\label{fig:validation}
\end{figure}
\section{Application: Distribution Grid Reliability}
\label{sec:application}
Assessing whether grid modernization improves reliability, and with what delay, remains a key challenge in distribution systems. Advanced Metering Infrastructure (AMI), i.e., smart meters enabling automated outage detection and faster response, has seen widespread deployment, yet its reliability impact remains unclear due to gradual adoption, nonlinear effects, and heterogeneity across utilities. FIR models input--output relationships using discretized states, enabling associative analysis, prediction, and counterfactual simulation. We analyze data from 1300 U.S. distribution utilities (2013–2024) \cite{eia861m_sales_revenue}. The output is $\mathrm{CAIDI}_{i,t}$ (minutes per interruption). Inputs include lagged CAIDI $R_{t-\ell}$, revenue per customer $S_{t-\ell}$, and AMI penetration $A_{t-\ell}$, with lags $\ell \in \{0,1,2,3\}$. Variables are discretized into $q \in \{2,3\}$ levels. Note that this application provides complementary validation on a multi-variable coupled scenario with three input variables ($R$, $S$, $A$) at mixed discretization resolutions ($q_R=2$, $q_S=3$, $q_A=3$), confirming the framework generalizes beyond single-variable settings.

\subsection{Regime Diagnosis}
Before selecting masks, we diagnose which regime the data occupies using Theorems~\ref{thm:threshold}--\ref{thm:sample}. The largest candidate mask has $K(M_{\max}) = 5{,}832$ cells. For input structure identification we have $N_s = 9{,}132$, and so $\lambda \approx 1.57$, i.e., 
the data lies near the sparse boundary. Under Poisson occupancy, approximately, 21\% and 35\%
of cells are empty respectively, so per-cell estimation variance remains high and entropy masking may over-select. The permutation test is therefore essential to screen spurious entropy reductions. 
The relevant diagnostic for structural recovery is $\rho$: with $q = 3$ output bins, $H_q(\varepsilon) \leq \log_2 3 \approx 1.58$ bits and $\rho < 1$, confirming the stochastic resolved regime. The decision principle (Table~\ref{tab:principle}) prescribes: entropy masking for input structure identification, permutation testing for confirmation, and MSE for prediction.

\subsection{Input Structure Mask Analysis}
Minimizing $\hat{H}(M)$ over the full panel yields $\ME = \{A(t{-}2),\; A(t{-}3),\; I(t{-}1),\; R(t{-}1)\}$. The 2--3 year AMI lag is interpretable: deployment takes 2--3 years to reach penetration sufficient for fault-isolation benefits. Since $\rho < 1$ and $\lambda > 1$, Theorem~\ref{thm:sample} guarantees $\ME \supseteq \Mstar$.

AMI values are shuffled across utilities within each year (300 permutations), preserving marginal distributions but breaking the AMI--CAIDI association. This permutation test (300 shuffles, $p = 0.003$, Fig.~\ref{fig:permtest}) confirms a non-random entropy contribution. Counterfactual simulation evaluates $\hat{P}(Y \mid B_k)$ directly from the rule table without refitting. Moving from low to high AMI in the recovery state ($R_{t-1}{=}\text{poor}$, $S_{t-1}{=}\text{low}$) shifts $P(\text{poor}): 0.410{\to}0.389$ and $P(\text{good}): 0.171{\to}0.197$ (Fig.~\ref{fig:counterfactual}),  confirming AMI's protective effect in the recovery state.

\begin{figure}[h]
\centering
\begin{subfigure}[b]{0.48\columnwidth}
    \centering
    \includegraphics[width=\linewidth]{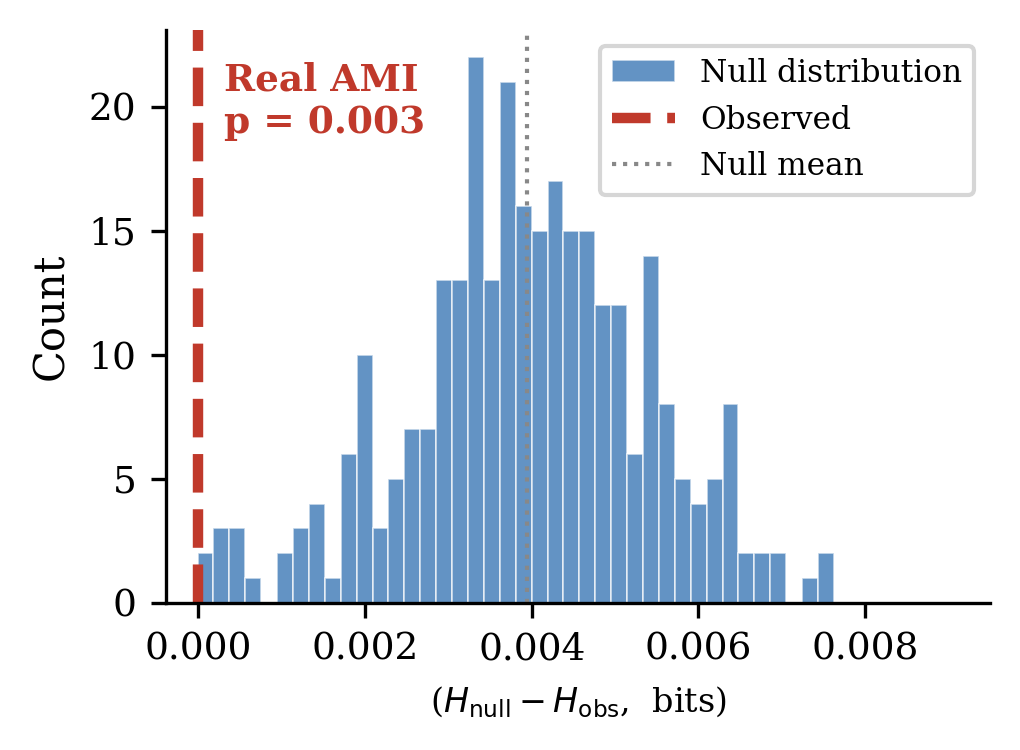}
    \caption{Permutation test ($p=0.003$).}
    \label{fig:permtest}
\end{subfigure}
\hfill
\begin{subfigure}[b]{0.48\columnwidth}
    \centering
    \includegraphics[width=\linewidth]{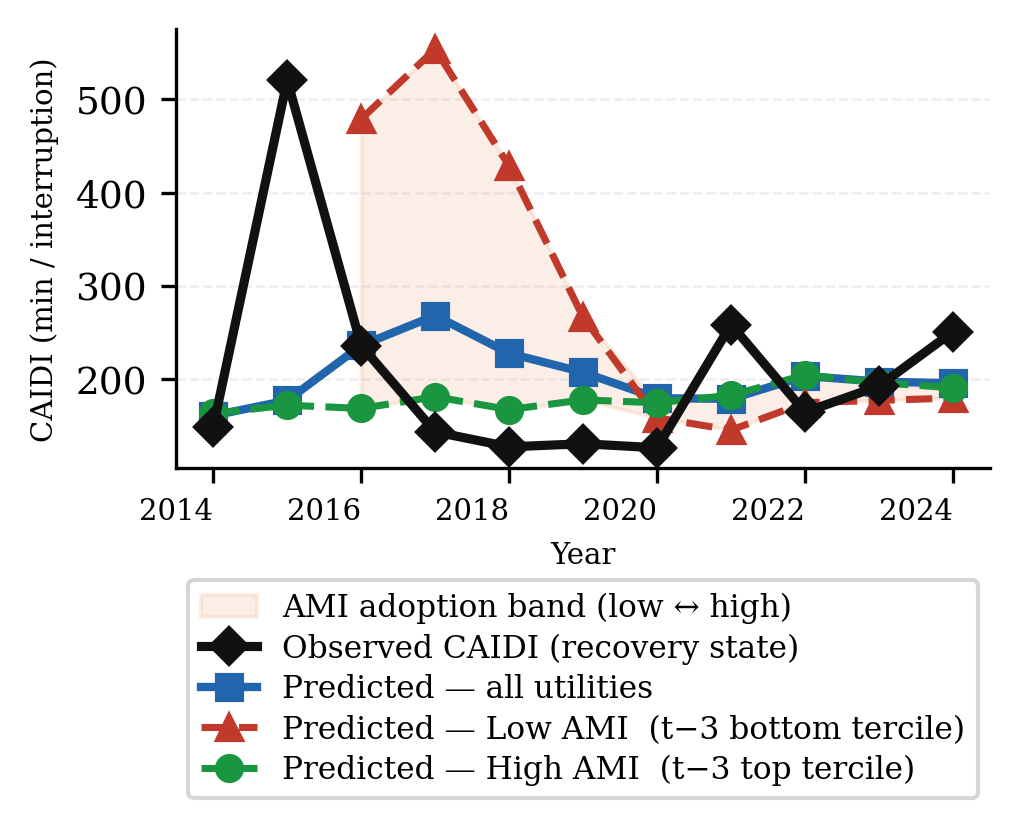}
    \caption{Counterfactual: recovery state.}
    \label{fig:counterfactual}
\end{subfigure}
\caption{Associative evidence for AMI's contribution to CAIDI under the FIR rule-table structure. \textbf{(a)}~Permutation test. \textbf{(b)}~Predicted CAIDI shift under low vs.\ high AMI ($R_{t-1}{=}\text{poor}$, $S_{t-1}{=}\text{low}$).}
\label{fig:causal_evidence}
\end{figure}

\subsection{Prediction}
To evaluate Theorem~\ref{thm:delta}, we consider $N_s = 8427$. The masks are re-optimized on this sample, yielding $\MR = \{R(t),\; R(t{-}1)\}$ with $K(\MR) = 4$ and $\ME = \{A(1,2,3),\; I(1,2,3),\; R(0,1,2)\}$ with $K(\ME) = 5{,}832$. The corresponding MSE values are
$\hat{R}^2(\MR) = 102.38 < \hat{R}^2(\ME) = 183.84$ (Fig.~\ref{fig:rmse_compare}). Since $M^\star$ is unobserved, $\MR$ serves as its empirical proxy; $\delta_{\mathrm{emp}} = 183.84^2 - 102.38^2 \approx 23{,}315$, consistent with the Theorem~\ref{thm:delta} bound of $31{,}806$, confirming $\delta > 0$: substituting $\ME$ for $\MR$ incurs 79\% excess prediction error.

\begin{figure}[t]
\centering
\includegraphics[width=\columnwidth]{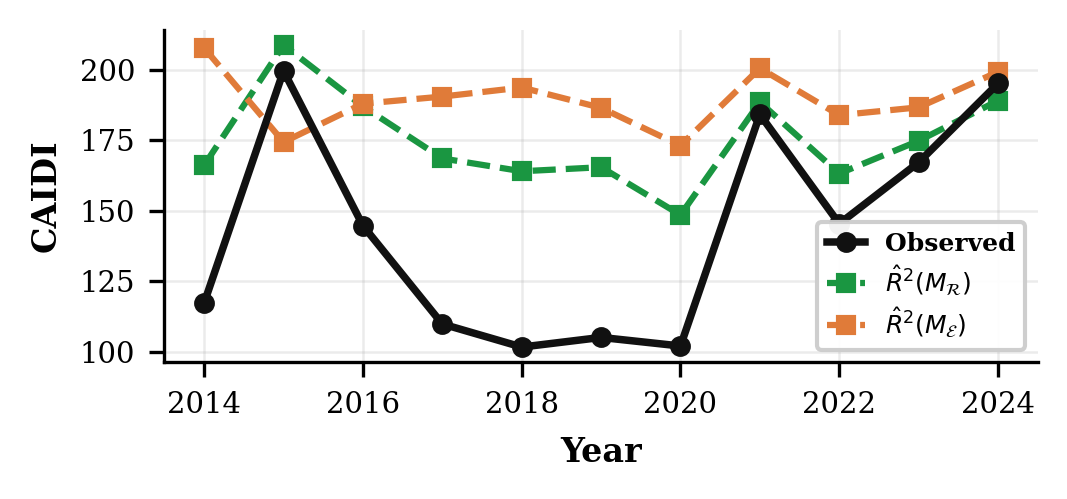}
\caption{Observed vs.\ predicted CAIDI for $\MR$ and $\ME$. 
(Theorem~\ref{thm:delta}).}
\label{fig:rmse_compare}
\end{figure}

\section{Conclusion}
\label{sec:conl}
This work characterizes limits of entropy-based structure identification in discretized nonlinear systems under stochasticity and data sparsity. Three results follow: (1) the resolution-stochasticity ratio $\rho$ governs mask recovery: entropy masking selects an informative mask for $\rho < 1$ but loses discriminative power otherwise; (2) entropy over-selection induces excess prediction risk $\delta$; and (3) exact structural recovery requires data scaling with the number of cells and inversely with the square of the intrinsic coarsened entropy, implying rapidly increasing data requirements as discretization resolution grows. Results are validated on a Markov chain and on distribution grid reliability data. The framework clarifies the distinct roles of the two objectives. Entropy-based selection targets explainability by identifying which variables causally drive the output, while MSE-based selection targets prediction, providing a principled criterion for choosing between them. Two limitations remain. First, the excess-risk bound treats $K(\ME)$ as fixed; when mask size varies, Jensen's inequality~\cite{cover2006elements} implies the bound understates expected excess risk. Second, the threshold $\rho_c$ assumes Gaussian stochasticity, although the regime structure does not depend on the specific distribution. Several directions remain open such as tightening the excess-risk bound via a Jensen correction for random mask size; extending the framework to dependent observations using mixing-time arguments; and developing data-adaptive criteria for selecting $q$ to jointly satisfy the $\rho$ and $\lambda$ conditions.


\balance 
\bibliographystyle{IEEEtran}
\bibliography{Ref}

@article{cellier1996combined,
  title={Combined qualitative/quantitative simulation models of continuous-time processes using fuzzy inductive reasoning techniques},
  author={Cellier, Francois E and Nebot, {\`A}ngela and Mugica, Francisco and Albornoz, Alvaro De},
  journal={International Journal of General System},
  volume={24},
  number={1-2},
  pages={95--116},
  year={1996},
  publisher={Taylor \& Francis}
}

@article{miller1955note,
  title={Note on the bias of information estimates},
  author={Miller, George},
  journal={Information theory in psychology: Problems and methods},
  year={1955},
  publisher={Free Press}
}

@article{tang2024fuzzy,
  title={Fuzzy logic approach for controlling uncertain and nonlinear systems: a comprehensive review of applications and advances},
  author={Tang, Hooi Hung and Ahmad, Nur Syazreen},
  journal={Systems Science \& Control Engineering},
  volume={12},
  number={1},
  pages={2394429},
  year={2024},
  publisher={Taylor \& Francis}
}

@misc{eia861m_sales_revenue,
  author       = {{U.S. Energy Information Administration}},
  title        = {},
  year         = {},
  howpublished = {\url{https://www.eia.gov/electricity/data/eia861m/}},
  note         = {}
}

@article{kalhori2022data,
  title={A data-driven knowledge-based system with reasoning under uncertain evidence for regional long-term hourly load forecasting},
  author={Kalhori, M Rostam Niakan and Emami, I Taheri and Fallahi, F and Tabarzadi, M},
  journal={Applied Energy},
  volume={314},
  pages={118975},
  year={2022},
  publisher={Elsevier}
}

@inproceedings{bagherpour2015hierarchical,
  title={A Hierarchical Perspective to Fuzzy Inductive Reasoning},
  author={Bagherpour, Solmaz and Mugica, Fransisco and Nebot, {\`A}ngela},
  booktitle={2015 IEEE International Conference on Fuzzy Systems (FUZZ-IEEE)},
  pages={1--8},
  year={2015},
  organization={IEEE}
}

@article{jurado2017fuzzy,
  title={Fuzzy inductive reasoning forecasting strategies able to cope with missing data: A smart grid application},
  author={Jurado, Sergio and Nebot, {\`A}ngela and Mugica, Fransisco and Mihaylov, Mihail},
  journal={Applied soft computing},
  volume={51},
  pages={225--238},
  year={2017},
  publisher={Elsevier}
}

@article{nebot2012fuzzy,
  title={Fuzzy Inductive Reasoning: a consolidated approach to data-driven construction of complex dynamical systems},
  author={Nebot, {\`A}ngela and Mugica, Francisco},
  journal={International Journal of General Systems},
  volume={41},
  number={7},
  pages={645--665},
  year={2012},
  publisher={Taylor \& Francis}
}

@article{acosta2007optimization,
  title={Optimization of fuzzy partitions for inductive reasoning using genetic algorithms},
  author={Acosta, J and Nebot, A and Villar, Pedro and Fuertes, Josep M},
  journal={International Journal of Systems Science},
  volume={38},
  number={12},
  pages={991--1011},
  year={2007},
  publisher={Taylor \& Francis}
}

@article{mamlook2009fuzzy,
  title={A fuzzy inference model for short-term load forecasting},
  author={Mamlook, Rustum and Badran, Omar and Abdulhadi, Emad},
  journal={Energy Policy},
  volume={37},
  number={4},
  pages={1239--1248},
  year={2009},
  publisher={Elsevier}
}

@book{cover2006elements,
  title={Elements of information theory (wiley series in telecommunications and signal processing)},
  author={Cover, Thomas M and Thomas, Joy A},
  year={2006},
  publisher={Wiley-interscience}
}

@article{peng2005feature,
  title={Feature selection based on mutual information criteria of max-dependency, max-relevance, and min-redundancy},
  author={Peng, Hanchuan and Long, Fuhui and Ding, Chris},
  journal={IEEE Transactions on pattern analysis and machine intelligence},
  volume={27},
  number={8},
  pages={1226--1238},
  year={2005},
  publisher={IEEE}
}

@article{papaioannou2025role,
  title={The role of mutual information estimator choice in feature selection: An empirical study on mRMR},
  author={Papaioannou, Nikolaos and Myllis, Georgios and Tsimpiris, Alkiviadis and Vrana, Vasiliki},
  journal={Information},
  volume={16},
  number={9},
  pages={724},
  year={2025},
  publisher={MDPI}
}

@inproceedings{covert2023learning,
  title={Learning to maximize mutual information for dynamic feature selection},
  author={Covert, Ian Connick and Qiu, Wei and Lu, Mingyu and Kim, Na Yoon and White, Nathan J and Lee, Su-In},
  booktitle={International Conference on Machine Learning},
  pages={6424--6447},
  year={2023},
  organization={PMLR}
}

@article{mielniczuk2022information,
  title={Information theoretic methods for variable selection—a review},
  author={Mielniczuk, Jan},
  journal={Entropy},
  volume={24},
  number={8},
  pages={1079},
  year={2022},
  publisher={MDPI}
}

\end{document}